\newtheorem{theorem}{Theorem}
\newtheorem{definition}{Definition}
\newtheorem{proposition}{Proposition}
\tikzstyle{min}=[thick,circle,draw,minimum size=1.3em,inner sep=0em,text centered]
\newcommand{\supp}{the Appendix}
\setlist{nosep}
\newcommand{\Nset}{\mathbb{N}}
\newcommand{\Inv}{\mathbb{I}}
\newcommand{\Out}{\textit{Out}}
\newcommand{\In}{\textit{In}}
\newcommand{\Run}{\mathrm{Run}}
\newcommand{\MRun}{\mathrm{MRun}}
\newcommand{\Col}{\textit{Col}}
\newcommand{\cols}{\gamma}
\newcommand{\prob}{\mathbb{P}}
\newcommand{\Prob}{\textit{P}}
\newcommand{\Freq}{\mathit{Freq}}
\newcommand{\Obj}{\textit{Obj}}
\newcommand{\Dist}{\mathit{Dist}}
\newcommand{\dist}{\mathcal{D}}
\newcommand{\AltDist}{\mathit{AltDist}}
\newcommand{\Ach}{\mathcal{F}}
\newcommand{\HR}{\textit{HR}}
\newcommand{\FR}{\textit{FR}}
\newcommand{\MR}{\textit{MR}}
\newcommand{\FMR}{\textit{FMR}}
\newcommand{\calA}{\mathcal{A}}
\newcommand{\PSPACE}{$\mathsf{PSPACE}$}
\newcommand{\NP}{$\mathsf{NP}$}
\newcommand{\coNP}{$\mathsf{coNP}$}
\DeclareRobustCommand\onedot{\futurelet\@let@token\@onedot}
\def\@onedot{\ifx\@let@token.\else.\null\fi\xspace}
\title{Steady-State Strategy Synthesis for Swarms of Autonomous Agents}
\author{
Martin Jon\'{a}\v{s}
\and
Anton\'{\i}n Ku\v{c}era
\and
Vojt\v{e}ch K\r{u}r\And
Jan Ma\v{c}\'{a}k\\
\affiliations
Faculty of Informatics, Masaryk University, Czechia
\emails
martin.jonas@mail.muni.cz, tony@fi.muni.cz, vojtech.kur@mail.muni.cz, 
macak.jan@mail.muni.cz
}
\begin{document}

\maketitle

\begin{abstract}
    Steady-state synthesis aims to construct a policy for a given MDP $D$ such that the long-run average frequencies of visits to the vertices of $D$ satisfy given numerical constraints. This problem is solvable in polynomial time, and memoryless policies are sufficient for approximating an arbitrary frequency vector achievable by a general (infinite-memory) policy.

    We study the steady-state synthesis problem for \emph{multiagent systems}, where multiple autonomous agents jointly strive to achieve a suitable frequency vector. We show that the problem for multiple agents is computationally hard (\PSPACE\ or \NP\ hard, depending on the variant), and 
    memoryless strategy profiles are \emph{insufficient} for approximating achievable frequency vectors. Furthermore, we prove that even \emph{evaluating} the frequency vector achieved by a given memoryless profile is computationally hard. This reveals a severe barrier to constructing an efficient synthesis algorithm, even for memoryless profiles. Nevertheless, we design an \emph{efficient and scalable} synthesis algorithm for a subclass of \emph{full} memoryless profiles, and we evaluate this algorithm on a large class of randomly generated instances. The experimental results demonstrate a significant improvement against a naive algorithm based on strategy sharing.
\end{abstract}

\section{Introduction}
\label{sec-intro}

Steady-state policy synthesis is the problem of computing a suitable decision-making policy (strategy) in a given Markov decision process (MDP) $D$ satisfying given constraints on the limit frequencies of visits to the states of~$D$. More precisely, we say that a strategy $\sigma$ in $D$ \emph{achieves} a frequency vector $\mu$ if for almost every infinite run $w$ in the Markov chain $D^\sigma$ induced by the strategy and every vertex $v$ of $D$ we have that the limit frequency of visits to $v$ along $w$ is equal to $\mu(v)$.

%
The existing works concentrate on the steady-state synthesis problem for a \emph{single} agent, where the task is to construct a strategy $\sigma$ achieving a frequency vector $\mu$ where $\vec{v}_\ell \leq \mu \leq \vec{v}_u$ for given lower and upper bounds $\vec{v}_\ell$ and~$\vec{v}_u$. The existence of such a strategy is decidable in polynomial time; and if it exists, it can be also computed in polynomial time by linear programming (see \emph{Related work}). Although some frequency vectors are only achievable by infinite-memory strategies, the subclass of \emph{memoryless} strategies is sufficient for producing frequency vectors \emph{arbitrarily close} to each achievable frequency vector. If the underlying graph of $D$ is strongly connected, then the same holds even for a special type of \emph{full} memoryless strategies assigning a positive probability to every edge of~$D$ (one can show that for every $\mu$ achievable by a memoryless strategy, there is a vector arbitrarily close to $\mu$ achievable by a full memoryless strategy). These properties are illustrated in Fig.\ref{fig-one-agent} on a trivial MDP with three non-deterministic states. Hence, in the single agent setting, memoryless strategies are sufficient for practical applications. Since we can safely assume that $D$ is a disjoint union of finitely many strongly connected MDPs, the same holds even for \emph{full} memoryless strategies (see Section~\ref{sec-algorithm} for details).

\paragraph{Our contribution}
In this paper, we extend the scope of steady-state policy synthesis problem to \emph{multiple autonomous agents}. More precisely, the task is to construct a strategy profile for $k{\geq} 1$ agents in a given MDP $D$ so that the frequencies of visits to the vertices of $D$ (or, more generally, to pre-defined classes of vertices represented by \emph{colors}) by some agent are above a given threshold vector. As a simple example, consider an MDP where the vertices represent devices requiring regular maintenance, and the threshold frequency vector specifies the minimal required frequency of inspections for each device. The steady-state policy synthesis for~$k$~agents then corresponds to the problem of designing appropriate schedules for $k$~independent technicians such that the required frequency of inspections is observed.\footnote{If two or more technicians meet at the same vertex at the same time, only one of them does the maintenance job. Hence, optimal strategy profiles tend to minimize the frequency of such redundant simultaneous visits. However, this redundancy cannot be avoided completely in general.} Our main results are twofold.
\smallskip

\noindent
\emph{\bfseries I.~Fundamental properties of the problem.} We analyze the role of memory and randomization in constructing (sub)optimal strategy profiles, and we also classify the computational complexity of the steady-state policy synthesis problem. The obtained results demonstrate that the steady-state policy synthesis for multiple agents is (perhaps even surprisingly) \emph{more complex} than for a single agent. Consequently, a different algorithmic approach is required. More concretely, we prove the following:

\emph{I(a). For two or more agents, the power of full memoryless, memoryless, finite-memory, and general strategy profiles increases strictly.} To explain this, we need to introduce one extra notion. Let $k \geq 1$, and let $A,B$ be sets of strategy profiles for an MDP $D$ and $k$~agents. Furthermore, let $\Ach(A)$ and $\Ach(B)$ be the sets of frequency vectors achievable by the profiles in $A$ and $B$.
We say that $B$ is \emph{more powerful} than $A$ if $\Ach(A) \subseteq \Ach(B)$, and there exists $\mu \in \Ach(B)$ that cannot be approximated by the vectors of $\Ach(A)$ (i.e., there is $\delta > 0$ such that the distance between $\mu$ and every $\nu \in \Ach(A)$ is at least $\delta$).

We show that for $k \geq 2$, the sets of full memoryless profiles, memoryless profiles, finite-memory profiles, and general profiles are increasingly more powerful
even for strongly connected \emph{graphs}, i.e., MDPs without stochastic vertices. 
This contrasts sharply with the single agent scenario where full memoryless profiles approximate general profiles.

\emph{I(b) The existence of an achievable $\mu$ such that $\mu \geq \vec{v}_\ell$ for a given $\vec{v}_\ell$ is a computationally hard problem.} 
Recall that for a single agent, the problem is solvable in polynomial time. For two or more agents, the problem is \NP-hard even if $D$ is a strongly connected graph and the set of profiles is restricted to full memoryless profiles, memoryless profiles, or finite-memory profiles with~$m$ memory states. For the ``colored'' variant of the problem, we obtain even \PSPACE-hardness.

\emph{I(c) Evaluating the frequency vector achieved by a given profile is computationally hard, even for strongly connected graphs and memoryless profiles.} Intuitively, the reason is that each strategy in the profile may induce a Markov chain with a different period. The complexity of the evaluation procedure depends on the least common multiple of these periods whose size can be exponential in~$k$. Note that for \emph{full} memoryless profiles, all of the induced Markov chains have the \emph{same} period. Consequently, full memoryless profiles can be evaluated in polynomial time on strongly connected MDPs. These observations have important algorithmic consequences explained in the subsection \emph{II. Efficient synthesis algorithm}. 

\emph{I(d) The existence of a finite-memory profile with $m$ memory states achieving a frequency vector $\mu$ such that $\mu \geq \vec{v}_\ell$ for a given $\vec{v}_\ell$ is decidable in polynomial space for every fixed number of agents.
} This holds also for the ``colored'' variant of the problem. The algorithm is based on encoding the problem as a formula of first order theory of the reals and applying the results of \cite{Canny:Tarski-exist-PSPACE}. The size of the formula is exponential in~$k$, which shows that the number of agents is a key parameter negatively influencing the computational costs.
\smallskip

\noindent
\emph{\bfseries II. Efficient synthesis algorithm.} 
Since general (infinite-memory) strategies are not algorithmically workable, the scope of algorithmic synthesis is naturally limited to \emph{finite-memory} profiles. 
The synthesis of a finite-memory profile for an MDP~$D$ where every strategy in the profile uses at most $m$~memory states is equivalent to the synthesis of a \emph{memoryless} profile for an MDP $D'$ obtained from $D$ by augmenting its vertices with memory states (see Section~\ref{sec-model} for details).   
%
Hence, the algorithmic core of the problem is the construction of \emph{memoryless} profiles. However, here we face the obstacle of~I(c), saying that even \emph{evaluating} memoryless profiles is computationally hard. This is a severe barrier, because every synthesis algorithm is driven by the objective involving the frequency vector of the constructed profile. 
Hence, a natural starting point is to explore the constructability of \emph{full} memoryless profiles that can be evaluated in polynomial time (see I(c)). This is challenging, despite the limitations identified in~I(a). According to I(b), the associated decision problem is \NP-hard even for two agents, and the synthesis can be seen as a non-linear optimization problem whose size increases with the number of agents (see Section~\ref{sec-algorithm}).


We propose an efficient algorithm for synthesizing full memoryless profiles based on \emph{incremental agent inclusion}. The main idea is the following: 
Suppose that we already constructed a full memoryless profile for $k$~agents, and we wish to extend the profile to $k{+}1$ agents. Our algorithm constructs several linear programs depending only on the threshold vector (the objective) and numerical parameters extracted from the previously computed profile for $k$ agents. 
Hence, the size of these programs is \emph{independent of $k$}. A full memoryless strategy for the newly included agent is extracted from the solutions of these linear programs. Thus, we prevent the blowup in~$k$, and the complexity of our synthesis algorithm becomes \emph{linear} in the number of agents~$k$. 
Thus, we (inevitably) trade efficiency for completeness, i.e., the algorithm does not have to find a suitable full MR profile even if it exists.
We evaluate our algorithm experimentally on a series of randomly generated instances, and we show that it clearly outperforms a naive algorithm based on strategy sharing (see Section~\ref{sec-experiments} for details).


\paragraph{Related work}

All existing works about steady-state synthesis apply to a single agent scenario. \cite{ABHH:steady-state-ergodic-MDP} solve the problem for \emph{unichain MDPs}, i.e., a subclass of MDPs where every memoryless deterministic policy induces an ergodic Markov chain, by designing a polynomial-space algorithm. A polynomial-time algorithm for general MDPs is given in \cite{BBCFK:MDP-two-views-LMCS}. This algorithm can compute \emph{infinite-memory} strategies, which may be necessary for achieving some frequency vectors (see Fig.~\ref{fig-one-agent}), and it is applicable to a more general class of multiple mean-payoff objectives. It has been implemented \cite{BCFK:Multigain-TACAS} on top of the PRISM model checker \cite{KNP-PRISM4-CAV}. In \cite{Velasquez-steady-state}, the problem of constructing a suitable memoryless policy inducing a recurrent Markov chain consisting of all vertices of a given MDP is solved by linear programming. A generalization of this work is presented in \cite{ABAV:steady-state-synthesis-multichain}. Recent works \cite{Kretinsky:steady-state-LTL,VABTA:LTL-steady-state-AAMAS-journal} combine steady-state constraints with LTL specifications. There are also works concentrating on steady-state deterministic policy synthesis \cite{VASWA:deterministic-steady-state-JAR}.

\begin{figure}[t]\centering
    \begin{tikzpicture}[x=1.1cm, y=1.1cm, >=stealth',font=\footnotesize]
        \foreach \i/\x/\y in {0/0/0,1/4/0,2/2/-1.8}{
            \coordinate (c\i) at (\x,\y);
            \node[min] (A\i) at (c\i) {$v_1$};
            \node[min] (B\i) at ($(c\i) +(1,0)$) {$v_2$};
            \node[min] (C\i) at ($(c\i) +(2,0)$) {$v_3$};
            \node[below=.35 of A\i] 
              {\ifthenelse{\i=1}{$0.5 {-} \delta$}{} 
               \ifthenelse{\i=2}{$1{-}\delta_1{-}\delta_2$}{}};
            \node[below=.35 of B\i] 
              {\ifthenelse{\i=1}{$2\delta$}{}
               \ifthenelse{\i=2}{$\delta_1$}{}};            
            \node[below=.35 of C\i] 
              {\ifthenelse{\i=1}{$0.5 {-} \delta$}{}
               \ifthenelse{\i=2}{$\delta_2$}{}};
            \draw[->,thick,bend left] (A\i) to node[above] {\ifthenelse{\i=0}{}{$\varepsilon$}} (B\i);
            \draw[->,thick,bend left] (B\i) to node[below] 
              {\ifthenelse{\i=1}{$0.5$}{} 
               \ifthenelse{\i=2}{$1{-}\varepsilon$}{}}  (A\i);            
            \draw[<-,thick,bend left] (B\i) to node[above] 
              {\ifthenelse{\i=1}{$\varepsilon$}{}
               \ifthenelse{\i=2}{$1{-}\varepsilon$}{}} (C\i);
            \draw[<-,thick,bend left] (C\i) to node[below] {
                \ifthenelse{\i=1}{$0.5$}{}
                \ifthenelse{\i=2}{$\varepsilon$}{}} (B\i);
            \draw[->,thick,out=150,in=210,looseness=7] (A\i) to node[above=.15] {\ifthenelse{\i=0}{}{$1{-}\varepsilon$}{}} (A\i);
            \draw[->,thick,out=30,in=-30,looseness=7] (C\i) to node[above=.15] {\ifthenelse{\i=1}{$1{-}\varepsilon$}{}
             \ifthenelse{\i=2}{$\varepsilon$}{}} (C\i);
            }
\end{tikzpicture}
\caption{Left: A simple MDP with three non-deterministic vertices $v_1$, $v_2$, and $v_3$. 
Right: A memoryless strategy for a single agent can achieve the frequency vector $(0.5{-}\delta, 2\delta, 0.5{-}\delta)$ for an arbitrarily small $\delta{>}0$ by choosing a sufficiently small $\varepsilon {>} 0$.
However, the frequency vector $(0.5,0,0.5)$ is achievable only by an infinite-memory strategy where the $\varepsilon$ is ``progressively smaller'' and approaches $0$ as the vertices $v_1$ and $v_3$ are revisited. Middle: A full memoryless strategy can achieve the frequency vector $(1{-}\delta_1{-}\delta_2,\delta_1,\delta_2)$ where $\delta_1{+}\delta_2>0$ is arbitrarily small by choosing a sufficiently small $\varepsilon{>}0$. However, the vector $(1,0,0)$ is achievable only by a (non-full) strategy assigning~$1$ to the self-loop $v_1 {\to} v_1$.}    
\label{fig-one-agent}
\end{figure}
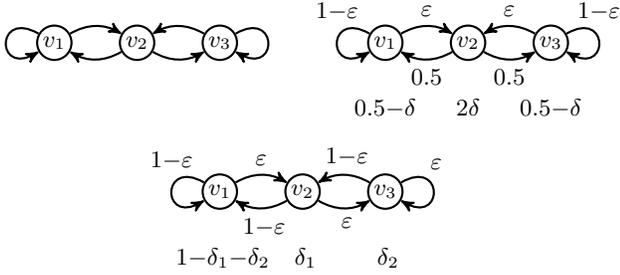
    
\section{The Model}
\label{sec-model}

We assume familiarity with basic notions of probability theory and Markov chain theory.
We use $\Nset$ and $\Nset_+$ to denote the sets of all non-negative and positive integers, respectively, and $\dist(A)$ to denote the set of all probability distributions over a finite set~$A$. A directed graph is a pair $G = (V,E)$ where $E \subseteq V \times V$. For every $v \in V$, we use $\In(v)$ and $\Out(v)$ to denote the sets of all in-going and out-going edges of $v$. We say that $G$ is \emph{strongly connected} if for all $v,u \in V$ there is a finite sequence $v_1,\ldots,v_n$ such that $n \geq 1$, $v_1 = v$, $v_n=u$, and $(v_i,v_{i+1}) \in E$ for all $1 \leq i <n$.

\paragraph{Markov chains}
A \emph{Markov chain} is a triple $C=(S,\Prob, \alpha)$ where $S$ is  a finite set of states, $\Prob \colon S\times S \to [0,1]$ is a stochastic matrix  such that $\sum_{s' \in S}\Prob(s,s') = 1$ for every $s \in S$, and $\alpha \in \dist(S)$ is an initial distribution. 

A state $t$ is \emph{reachable} from a state $s$ if $\Prob^n(s,t) > 0$ for some $n\geq 1$, where $\Prob^n$ denotes the $n$-th power of~$\Prob$. A \emph{bottom strongly connected component (BSCC)} of $C$ is a maximal $B \subseteq S$ such that $B$ is strongly connected and closed under reachable states, i.e., for all $s,t \in B$ and $r \in S$ we have that $t$ is reachable from $s$, and if $r$ is reachable from $s$, then $r \in B$.
A Markov chain $C$ is \emph{irreducible} if for all $s,t \in S$ 
we have that $t$ is reachable from $s$. 
We use $\Inv$ to denote the unique \emph{invariant distribution} of~$C$.
Note that every BSCC of $C$ can be seen as an irreducible Markov chain.

For every $s \in S$, let $d(s) = \gcd \{n \in \Nset_+ \mid \Prob^n(s,s) {>} 0\}$ be the \emph{period} of~$s$. Recall that if $C$ is irreducible, then $d(s)$ is the same for all $s \in S$ and defines the \emph{period} of $C$, denoted by $d$ (if $C$ is not clear, we write $d_C$ instead of $d$). Furthermore, the set $S$ can be partitioned into \emph{cyclic classes} $S_0,\ldots,S_{d-1}$ such that for all $i,j \in \{0,\ldots,d{-}1\}$ and $s,t \in S$ where $s \in S_i$ we have that $t \in S_j$ iff $\Prob^n(s,t) > 0$ for some $n \equiv (j{-}i) \ \mathrm{mod}\ d$. The structure of cyclic classes is shown in Fig.~\ref{fig-cyclic}. We say that $C$ is \emph{aperiodic} or \emph{periodic} depending on whether $d{=}1$ or not, respectively.

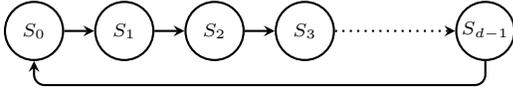
\begin{figure}[t]\centering
    \tikzstyle{st}=[thick,circle,draw,minimum size=2.2em,inner sep=0em,text centered]
    \tikzstyle{ar}=[thick,-stealth,rounded corners]
    \begin{tikzpicture}[font=\scriptsize,x=1.2cm, y=1.2cm]
    \foreach \i/\x in {0/0, 1/1, 2/2, 3/3, 5/d{-}1}{
        \node[st] (n\i) at (\i,1) {$S_{\x}$};
    }
    \foreach \i/\j in {0/1, 1/2, 2/3}{
        \draw[ar] (n\i) to (n\j);
    }
    \draw[ar,dotted] (n3) to (n5);
    \draw[ar] (n5) -- +(0,-.6) -| (n0);
    \end{tikzpicture}
\caption{The structure of cyclic classes. For all states $s,t$ we have that $P(s,t) > 0$ only if $s \in S_i$ and $t \in S_{i{+}1 ~\mathrm{mod}~ d}$ for some $i<d$.}
\label{fig-cyclic}
\end{figure}





\paragraph{Markov decision processes (MDPs)} A \emph{Markov decision process (MDP)}\footnote{The adopted MDP definition is standard in the area of graph games. It is
equivalent to the ``classical'' definition of \cite{Puterman:book} but leads to simpler notation.} is a triple
$D {=} (V,E,p)$ where $V$ is a finite set of \emph{vertices} partitioned into subsets $(V_N,V_S)$ of \emph{non-deterministic} and \emph{stochastic} vertices, $E \subseteq V {\times} V$ is a set of \emph{edges} such that  every vertex has at least one outgoing edge, and $p\colon V_S {\to} \dist(V)$ is a \emph{probability assignment} s.t.{} $p(v)(v') {>} 0$ iff $(v,v') \in E$. A \emph{run} of $D$ is an infinite sequence $\omega = v_1,v_2,\ldots$ such that $(v_i,v_{i+1}) \in E$ for every $i \in \Nset$. The $i$-th vertex $v_i$ visited by $\omega$ is denoted by $\omega(i)$.  We say $D$ is \emph{strongly connected} if the underlying directed graph $(V,E)$ is strongly connected. $D$ is a \emph{graph} if $V_S = \emptyset$.

\paragraph{Strategies}
Outgoing edges in non-deterministic states of an MDP $D = (V,E,p)$ are selected by a \emph{strategy}. The most general type of strategy is a \emph{history-dependent randomized (HR)} strategy where the selection may be randomized and depend on the whole computational history. Since HR strategies require infinite memory, they are not apt for algorithmic purposes.

A strategy is \emph{memoryless randomized (MR)} if the (possibly randomized) decision depends only on the current vertex. Formally, a MR strategy is a pair $\sigma = (v_0,\kappa)$ where $v_0 \in V$ is the \emph{initial} vertex and $\kappa : V \to \dist(V)$ is a function such that $\kappa(v)(u) > 0$ implies $(v,u) \in E$, and for all $v \in V_S$ and $u \in V$ we have that $\kappa(v)(u) = p(v)(u)$. We say that $\sigma$ is \emph{full} if $\kappa(v)(u) > 0$ for all $(v,u) \in E$.

In this paper, we also consider finite-memory randomized strategies with $m\geq 1$ memory states (\emph{FR$_m$ strategies}). Intuitively, the memory states are used to ``remember'' some information about the sequence of previously visited vertices. Formally, let $V' = V \times \{1,\ldots,m\}$ be the set of \emph{augmented vertices}. A FR$_m$ strategy is a pair
$(({v}_0,i_0),\eta)$ where $(v_0,i_0) \in V'$ is an \emph{initial} augmented vertex and $\eta\colon V' \to \dist(V')$ such that $\eta(v,i)(u,j)>0$ implies $(v,u) \in E$. Furthermore, for every $(v,i)$ where $v \in V_S$ and every $(v,u)\in E$ we require $\sum_{j=1}^m \eta(v,i)(u,j) \ = \ p(v)(u)$. Note that every FR$_m$ strategy can be seen as a \emph{memoryless} strategy for an MDP $D'$ where $V'$ is the set of vertices.

Let $\xi$ be a strategy (HR, FR$_m$, or MR). For every finite path $v_1,\ldots,v_n$ in $D$, the strategy $\xi$ determines the probability $\prob_\xi(v_1,\ldots,v_n)$ of executing the path. By applying the extension theorem (see, e.g., \cite{Rosenthal:book}), the function $\prob_\xi$ is extended to the probability measure over all runs in $D$.

\paragraph{Strategy profiles}

Let $k \geq 1$. A HR, FR$_m$, MR, or full MR \emph{strategy profile} for $k$ agents is a tuple $\pi = (\xi_1,\ldots,\xi_k)$ where every $\xi_i$ is a HR, FR$_m$, MR, or full MR strategy. A \emph{multi-run} is a tuple $\varrho = (\omega_1,\ldots,\omega_k)$ where each $\omega_i$ is a run of~$D$. We use $\prob_\pi$ to denote the product measure in the product probability space over the set of all multi-runs.

\paragraph{Steady-state objectives}
Let $D = (V,E,p)$ be an MDP and $\Col : V \to \cols$ a \emph{coloring}, where $\cols \neq \emptyset$ is a finite set of colors. A coloring is \emph{trivial} if $\cols {=} V$ and $\Col(v) {=} v$ for every $v \in V$.

Let  $\pi = (\xi_1,\ldots,\xi_k)$ be a strategy profile and $\varrho = (\omega_1,\ldots,\omega_k)$ a multi-run. For all $c \in \cols$ and $n \geq 1$, we use $\#_c^n(\varrho)$ to denote the total number of all $j \in \{1,\ldots,n\}$ such that  $\Col(\omega_i(j)) = c$ for some $i \in \{1,\ldots,k\}$. Furthermore, we define
\[
    \Freq_c(\varrho) \ = \ \lim_{n \to \infty} \frac{\#_c^n(\varrho)}{n}.
\]
If the above limit does not exist, we put $\Freq_c(\varrho) = {\perp}$.  We use $\Freq(\varrho) : \cols \to [0,1]$ to denote the vector of all $\Freq_c(\varrho)$.

Intuitively, $\Freq_c(\varrho)$ is the long-run average frequency of visits to a $c$-colored vertex by some agent. We say that $\pi$ \emph{achieves} a vector $\mu : \cols \to [0,1]$ if $\prob_\pi[\Freq{=}\mu] = 1$. That is, for every color $c$, the long-run average frequency of visits to a $c$-colored vertex is defined and equal to $\mu(c)$ for almost all multi-runs.

A \emph{steady-state objective} is a vector $\Obj : \cols \to [0,1]$. The task is to construct a strategy profile $\pi$ for $k$ agents such that $\pi$ achieves a vector $\mu \geq \Obj$.

\section{Fundamental Properties of Multi-Agent Steady-State Synthesis}
\label{sec-fundaments}

In this section, we analyze the computational complexity of multi-agent steady-state synthesis. We also investigate the relative power of HR, FR$_m$, MR, and full MR strategy profiles. Proofs of the presented theorems are non-trivial and can be found in \supp.

Let $A$ and $B$ be sets of strategy profiles for an MDP $D$ and $k\geq 1$ agents. Furthermore, let $\Ach(A)$ and $\Ach(B)$ be the sets of all frequency vectors achievable by the profiles of $A$ and $B$, where $\Col$ is the trivial coloring (see Section~\ref{sec-model}). We say that $A$ \emph{approximates}~$B$ if for every $\mu \in \Ach(B)$ and every $\varepsilon > 0$, there is $\nu \in \Ach(A)$ such that $L_\infty(\mu{-}\nu) < \varepsilon$, where $L_\infty(\mu{-}\nu) = \max_{c}(|\mu(c){-}\nu(c)|)$ is the standard $L_\infty$ norm. Furthermore, we say that $B$ is \emph{more powerful} than $A$, written $A \prec B$, if $\Ach(A) \subseteq \Ach(B)$ and $A$~does \emph{not} approximate $B$.

Slightly abusing our notation, we use $\HR(D,k)$, $\FR_m(D,k)$, $\MR(D,k)$, and $\FMR(D,k)$ to denote the sets of all HR, FR$_m$, MR, and full MR strategy profiles for an MDP~$D$ and $k\geq 1$ agents. The next theorem says that the relative power of HR, FR$_m$, MR, and full MR profiles \emph{strictly decreases for $k \geq 2$ agents}, even if $D$ is a strongly connected graph. 
Since the proof reveals important differences from the single agent scenario, we give a brief sketch.
\begin{theorem}
\label{thm-profile-power}
    There exist strongly connected graphs $D_1, D_2$, and $D_3$ such that
    \begin{itemize}
        \item $\FMR(D_1,2) \prec \MR(D_1,2)$;
        \item $\MR(D_2,2) \prec \FR_2(D_2,2)$;
        \item $\FR_m(D_3,2) \prec \HR(D_3,2)$ for all $m \geq 1$.
    \end{itemize}
\end{theorem}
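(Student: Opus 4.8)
All three separations follow one recipe: pick a small strongly connected graph, display a frequency vector $\mu$ achieved by the stronger class through an explicit profile, and prove a quantitative lower bound showing every profile of the weaker class stays $L_\infty$-far from $\mu$. The common engine is that, for $k=2$ agents, $\Freq_v$ is the long-run average of $\mathbf{1}[\omega_1(j){=}v]+\mathbf{1}[\omega_2(j){=}v]-\mathbf{1}[\omega_1(j){=}\omega_2(j){=}v]$, so it is determined by the two induced stationary distributions \emph{together with} the collision frequency, and the latter is governed by the cyclic-class (periodicity) structure of the two induced Markov chains. It is convenient to record first the basic fact that if a profile $(\xi_1,\xi_2)$ induces two irreducible chains at least one of which is aperiodic, then it achieves $\Freq_v=1-(1-\Inv_1(v))(1-\Inv_2(v))$ for every $v$: apply the ergodic theorem to each marginal chain and to the product chain $(\omega_1(j),\omega_2(j))$, which is irreducible exactly because one factor is aperiodic, hence has invariant measure $\Inv_1\otimes\Inv_2$. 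When both chains are periodic the collision frequency is instead a Chinese-Remainder-Theorem density in the relative phases of the cyclic classes. Finally, $\Ach(\FMR)\subseteq\Ach(\MR)\subseteq\Ach(\FR_2)\subseteq\Ach(\HR)$ holds trivially, so only the ``not approximated'' halves need an argument.

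\paragraph{The bounded cases (first two bullets).}
For $\FMR\prec\MR$ let $D_1$ be the directed $4$-cycle $v_1{\to}v_2{\to}v_3{\to}v_4{\to}v_1$ with the extra chord $v_1{\to}v_3$; it is strongly connected and, containing a $3$- and a $4$-cycle, aperiodic, so every full MR strategy induces an aperiodic chain. The MR profile in which both agents traverse the bare $4$-cycle deterministically (probability $0$ on the chord), started one step apart, never lets the agents meet and hence achieves $\mu=(\tfrac12,\tfrac12,\tfrac12,\tfrac12)$. For a full MR profile the basic fact gives $\Freq_v=1-(1-\Inv_1(v))(1-\Inv_2(v))$; the only free parameter of agent $i$ is the branching probability at $v_1$, and one balance computation yields $\Inv_i=(x_i,1{-}3x_i,x_i,x_i)$ with $x_i\in(\tfrac14,\tfrac13)$, whence $\Freq_{v_2}=1-9x_1x_2<\tfrac7{16}<\tfrac12$; thus $\mu$ is at $L_\infty$-distance $>\tfrac1{16}$ from $\Ach(\FMR(D_1,2))$. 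For $\MR\prec\FR_2$ let $D_2$ have a hub $c$, a $2$-loop $c{\leftrightarrow}a$ and a $3$-loop $c{\to}b_1{\to}b_2{\to}c$ (strongly connected, aperiodic). A $\FR_2$ strategy can deterministically alternate the two loops, producing the period-$5$ cycle $c,a,c,b_1,b_2$; two copies run with a phase offset of $1$ never collide and achieve $\mu=(\tfrac45,\tfrac25,\tfrac25,\tfrac25)$. An MR agent has the single parameter $q_i$ = probability of entering the $2$-loop at $c$; when some $q_i\in(0,1)$ the basic fact applies and the pair of equations $\Freq_c=\tfrac45$, $\Freq_{b_1}=\tfrac25$ forces $3-q_i$ to be a root of $3t^2-19t+20$, i.e.\ $5$ or $\tfrac43$, neither in $[2,3]$; since $(q_1,q_2)\mapsto\Freq$ is continuous on the compact $[0,1]^2$ this yields a positive approximation gap, while the corner cases $q_1,q_2\in\{0,1\}$ give, by a CRT count, a finite list of vectors (such as $(\tfrac23,\tfrac12,\tfrac13,\tfrac13)$), each at distance $\ge\tfrac2{15}$ from $\mu$.

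\paragraph{The unbounded case (third bullet).}
Let $D_3$ be the three-vertex graph of Fig.~\ref{fig-one-agent} (left), in which $v_2$ is a cut vertex separating the self-loop at $v_1$ from the self-loop at $v_3$; I claim $\mu=(\tfrac12,0,\tfrac12)$ witnesses $\FR_m(D_3,2)\prec\HR(D_3,2)$ for every $m$. HR achieves it: let both agents deterministically follow the one infinite path that stays at $v_1$ for $k$ steps, hops through $v_2$ to $v_3$, stays there for $k$ steps, hops back, for $k=1,2,3,\dots$; the agents are always at the same vertex, the time at $v_2$ has frequency $0$, and the $v_1$- and $v_3$-blocks are asymptotically balanced, so $\Freq=(\tfrac12,0,\tfrac12)$ surely. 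For the lower bound, fix $m$ and suppose a $\FR_m$ profile achieves some $\nu$ with $L_\infty(\mu{-}\nu)<\delta$. From $\nu(v_2)<\delta$, the inequalities $\nu(v_j)\ge\max_i\Inv_i(v_j)$, and $\sum_v\Inv_i(v)=1$, one gets $\Inv_i(v_1),\Inv_i(v_3)\in(\tfrac12-2\delta,\tfrac12+\delta)$ for both agents, and then $\nu(v_1),\nu(v_3)\approx\tfrac12$ force the collision frequencies at $v_1$ and at $v_3$ to be within $O(\delta)$ of $\tfrac12$. If either induced chain were aperiodic, the basic fact would give collision frequency $\Inv_1(v_1)\Inv_2(v_1)\approx\tfrac14$ at $v_1$, hence $\nu(v_1)\approx\tfrac34$, a contradiction; so both induced chains are periodic, with periods $d_i\le 3m$ (they live on at most $3m$ augmented vertices). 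The collision constraints then phase-lock the two chains' cyclic classes and force each agent's period-$d_i$ cycle of classes to split (up to $O(\sqrt{\delta})$ invariant mass) into ``$v_1$-type'' and ``$v_3$-type'' classes; since $v_2$ separates $v_1$ from $v_3$, a $v_1$-type class cannot reach a $v_3$-type class without passing a $v_2$-carrying class, so at least one class of each agent is $v_2$-carrying, and quantifying this gives $\Inv_i(v_2)$ bounded below by a positive quantity of order $1/m$ — contradicting $\nu(v_2)<\delta$ once $\delta$ is of order smaller than $1/m$. Hence $\mathrm{dist}(\mu,\Ach(\FR_m(D_3,2)))>0$ for every $m$.

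\paragraph{Expected main obstacle.}
The constructions and the achievability halves are routine; the work is in the lower bounds, and it is concentrated in the periodic regimes. For the first two bullets this is a finite but fiddly case analysis over which induced chains are (a)periodic and, in the periodic cases, over the relative phases of their cyclic classes, plus a compactness step to upgrade ``not exactly hit'' to ``not approximated''. The genuinely delicate point is the last step of the third bullet: turning ``the two agents collide at $v_1$ and $v_3$ with frequency $\approx\tfrac12$'' into a usable lower bound on $\Inv_i(v_2)$. The intuition is that two \emph{independent} finite-memory agents can stay synchronised on their $\{v_1,v_3\}$-position only if each behaves essentially deterministically — a randomised near-trapping agent visits $v_2$ rarely but has wildly fluctuating sojourn lengths, which desynchronise it from an independent copy and would push $\nu(v_1)$ up to $\approx\tfrac34$ — and a deterministic $\FR_m$ agent's excursions between $v_2$-visits are simple paths through at most $m$ copies of $v_1$ or of $v_3$, so $\Inv_i(v_2)\ge\tfrac1{m+1}$. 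Making ``synchronised $\Rightarrow$ essentially deterministic'' precise — via an ergodicity/mixing analysis of the product chain $(\omega_1(j),\omega_2(j))$, together with the cyclic-class bookkeeping sketched above and care in the boundary case where both chains are periodic — is the crux; note that the resulting gap shrinking like $1/m$ is unavoidable, since $\bigcup_m\Ach(\FR_m(D_3,2))$ is already dense near $\mu$.
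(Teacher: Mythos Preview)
Your $D_1$ and $D_2$ differ from the paper's (which uses a two-vertex graph with one self-loop and a triangle with one self-loop, obtaining explicit gaps $\tfrac14$ and $\tfrac19$ by direct inequality manipulation), but your arguments are sound. The only looseness is the $D_2$ compactness step: the map $(q_1,q_2)\mapsto\Freq$ is not single-valued at the four corners of $\{0,1\}^2$ (there the achieved vector depends on the initial phase), so you must argue separately that (i) the aperiodic formula extends continuously to $[0,1]^2$ and its compact image misses $\mu$, and (ii) the finitely many actual corner vectors are also bounded away from $\mu$. The paper avoids compactness entirely and computes explicit gaps; your route is slicker but only yields existence of a gap.

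For the third bullet there is a real gap. Your reduction to the doubly-periodic case is correct, but the remainder---``phase-locking'', classifying cyclic classes as $v_1$-type or $v_3$-type, and extracting $\Inv_i(v_2)\gtrsim 1/m$---is not established, and you acknowledge as much. Two concrete obstacles: you have not handled $d_1\ne d_2$ (nor even $d_1=d_2$ with a nonzero phase offset), where the collision frequency is a sum over $\mathrm{lcm}(d_1,d_2)$ residue classes and the clean type dichotomy is far from obvious; and even in the equal-period aligned case, converting ``$\chi(v_1)\approx\Inv_i(v_1)$'' into ``each cyclic class has $v_1$-mass near $0$ or $1$'' and then into ``some class has $v_2$-mass near $1$'' requires tracking several approximation errors simultaneously through a Cauchy--Schwarz/variance step and the flow inequalities $\alpha'_{j+1}\le\gamma_j+\alpha'_j$. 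The paper sidesteps this with a different $D_3$ (the directed 4-cycle $w_1{\to}w_2{\to}w_3{\to}w_4{\to}w_1$ with self-loops at $w_1,w_2,w_3$, target $(\tfrac23,\tfrac23,\tfrac23,0)$) and a dual mechanism: it shows the collision $\chi(w_i)$ is \emph{small} at some $w_i$ where both invariants exceed $\tfrac17$; the low frequency of the cycle edge $(w_i,w_{i+1})$ (equal to $\pi_X(w_4)<3\varepsilon_m$) forces each agent's BSCC to contain a cycle entirely inside the $w_i$-fibre, bounding both periods by $m$; and then a one-line mass-propagation inequality (the $w_i$-mass in consecutive cyclic classes can drop by at most the outgoing-edge frequency) spreads the $w_i$-mass across \emph{every} cyclic class, whence the periodic collision formula yields $\chi(w_i)>3\varepsilon_m=\tfrac{3}{21m+147}$, a contradiction. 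This ``low collision $\Rightarrow$ short in-fibre cycle $\Rightarrow$ uniform spread $\Rightarrow$ collision lower bound'' mechanism is fully explicit and replaces the unfinished ``high collision $\Rightarrow$ near-deterministic'' step that is the crux of your approach.
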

The graphs $D_1, D_2$, and $D_3$ are shown in Fig.~\ref{fig-profile-power}, together with the frequency vectors achievable by the more powerful strategy profiles that cannot be approximated by the weaker strategy profiles (for $2$ agents). 

In $D_1$, the vector $(1,1)$ is achievable by a MR profile where both agents ``walk around the loop'' connecting $v_1$ and $v_2$, but they start in different vertices. However, for every vector $\nu$ achievable by a FMR profile we have that $\nu(v_2) \leq 0.75$. That is, the $L_\infty$-distance to $(1,1)$ is at least $\delta = 0.25$. Intuitively, this is because the self-loop $v_1 {\to} v_1$ has to be performed with a fixed positive probability, and even if this probability is very small, the two agents spend a \emph{significant} proportion of time by ``walking together'', regardless of their initial positions.

In $D_2$, a FR$_2$ profile achieving $(1,0.5,0.5)$ consists of strategies where both agents walk around the triangle, performing the self-loop on $u_1$ \emph{exactly once} (this is where two memory states are needed). The first agent starts in $u_1$ by performing the self-loop, and the other agent starts in $u_2$. Thus, the agents never meet, and together they produce the frequency vector $(1,0.5,0.5)$. However, for every vector $\nu$ achievable by a MR profile we have that the $L_\infty$-distance to $(1,0.5,0.5)$ is at least $1/9$. Observe that if both MR strategies assign zero probability to the self-loop on $u_1$, then the frequency of visits to $u_1$ achieved by the profile is at most $2/3$. If at least one of the MR strategies assigns a positive probability to the self-loop, then the two agents spend a significant proportion of time by ``walking together'', similarly as in $D_1$. This leads to the aforementioned gap of $1/9$.

The $D_3$ scenario requires deeper analysis. It is easy to show that the vector $(2/3,2/3,2/3,0)$ is achievable by a HR profile where both agents ``walk around the square'' performing each self-loop exactly $n$ times in the $n$-th cycle. Again, the agents are positioned so that they never meet in the same vertex. Furthermore, we show that for every $\nu$ achievable by a FR$_m$ profile, the $L_\infty$ distance to 
$(2/3,2/3,2/3,0)$ is at least $f(m)$ where $f : \Nset_+ {\to} (0,1]$ is a suitable function.

\begin{figure}[t]\centering
    \begin{tikzpicture}[x=1cm, y=1cm, >=stealth',font=\footnotesize]
    \coordinate (d1) at (0,0);
    \node[min] (A1) at (d1) {$v_1$};
    \node[below=.05 of A1] {$1$}; 
    \node[min] (A2) at ($(d1) +(1,0)$) {$v_2$};
    \node[below=.05 of A2] {$1$}; 
    \draw[->,thick,bend left] (A1) to (A2);
    \draw[->,thick,bend left] (A2) to (A1);
    \draw[->,thick,out=150,in=210,looseness=7] (A1) to  (A1);
    \node at ($(d1) + (.5,-1.2)$) {$D_1$};
    \coordinate (d2) at (2.5,0);
    \node[min] (B1) at (d2) {$u_2$};
    \node[below=.05 of B1] {$\frac{1}{2}$}; 
    \node[min] (B2) at ($(d2) +(1,0)$) {$u_3$};
    \node[below=.05 of B2] {$\frac{1}{2}$}; 
    \node[min] (B3) at ($(d2) +(.5,0.8)$) {$u_1$};
    \node[above=.05 of B3] {$1$}; 
    \draw[->,thick] (B3) to (B1);
    \draw[->,thick] (B1) to (B2);
    \draw[->,thick] (B2) to (B3);
    \draw[->,thick,out=150,in=210,looseness=7] (B3) to  (B3);   
    \node at ($(d2) + (.5,-1.2)$) {$D_2$};
    \coordinate (d3) at (5,0);
    \node[min] (C1) at (d3) {$w_1$};
    \node[below=.05 of C1] {$\frac{2}{3}$}; 
    \node[min] (C2) at ($(d3) +(1,0)$) {$w_2$};
    \node[below=.05 of C2] {$\frac{2}{3}$}; 
    \node[min] (C3) at ($(d3) +(1,1)$) {$w_3$};
    \node[above=.05 of C3] {$\frac{2}{3}$}; 
    \node[min] (C4) at ($(d3) +(0,1)$) {$w_4$};
    \node[above=.05 of C4] {$0$}; 
    \draw[->,thick] (C1) to (C2);
    \draw[->,thick] (C2) to (C3);
    \draw[->,thick] (C3) to (C4);
    \draw[->,thick] (C4) to (C1);
    \draw[->,thick,out=30,in=-30,looseness=7] (C2) to (C2);
    \draw[->,thick,out=30,in=-30,looseness=7] (C3) to (C3);
    \draw[->,thick,out=150,in=210,looseness=7] (C1) to (C1);      
    \node at ($(d3) + (.5,-1.2)$) {$D_3$};
\end{tikzpicture}
\caption{The graphs $D_1, D_2$, and $D_3$.}    
\label{fig-profile-power}
\end{figure}
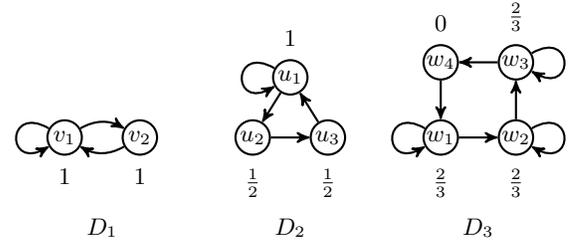

Our next result says that solving the steady-state objectives for $k \geq 2$ agents is computationally hard, even for graphs. 
\begin{theorem}
\label{thm-hardness}
   Let $D$ be a graph, $\Col$ a coloring, and $\Obj$ a frequency vector. We have the following:
   \begin{itemize}
    \item[(a)] The problem whether there exists a HR profile for a given number of agents that achieves $\mu \geq \Obj$ is \PSPACE-hard. This holds even under the assumption that if such a $\mu$ exists, it can be achieved by a FR$_m$ profile for a sufficiently large~$m$.
    \item[(b)] The problem whether there exists a FMR profile for two agents achieving $\mu$ such that $\mu \geq \Obj$ is \NP-hard, even if $D$ is strongly connected and $\Col$ is the trivial coloring. 
    This holds also for MR and FR$_m$ profiles (for every $m$).
   \end{itemize}
\end{theorem}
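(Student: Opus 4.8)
The plan is to prove each item by a polynomial-time many-one reduction from a standard complete problem; the recurring twist is that the agents move \emph{autonomously and never communicate}, so every coordination or consistency requirement has to be forced indirectly, through the tightness of long-run frequency thresholds on suitably chosen colors.

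\smallskip\noindent\textbf{Part (a).} I would reduce from the acceptance problem for \emph{alternating polynomial-time Turing machines} (equivalently, from \emph{TQBF}), which is \PSPACE-complete. The construction uses $k=\mathrm{poly}(n)$ agents whose \emph{joint} positions encode a polynomially sized object — the current configuration of the machine, i.e.\ the (polynomially long) work tape together with the head position and the control state — each agent carrying only $O(1)$ bits of it. The graph is laid out as a periodic ``track'' on which one macro-step of the computation is executed per fixed-length phase. At an \emph{existential} step the synthesizer's own strategy choice selects the successor configuration; at a \emph{universal} step the successor is picked by \emph{forced randomization}: a designated vertex with out-edges $e_0,e_1$ is equipped with auxiliary colors whose lower-bound thresholds in $\Obj$ can be met only if both $e_0$ and $e_1$ carry positive frequency, so — since the profile must achieve some $\mu\geq\Obj$ with probability one — the objective has to be satisfied along the continuations of \emph{both} branches. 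A further family of colors implements Cook--Levin-style local checks, with thresholds that are tight exactly when, in (almost) every phase, consecutive configurations are related by the machine's transition relation and the computation eventually accepts. The need for a growing number of agents, in contrast with the single-agent case, which lies in $\mathsf{P}$, stems from the fact that the configuration and the local consistency checks must be distributed over the agents and can be tied together only through the aggregate objective. Finally, a ``yes'' instance is witnessed by the periodic play-back of a fixed accepting strategy tree of bounded (though possibly exponential) size, which yields the claimed strengthening: whenever a suitable $\mu$ exists, it is already achieved by an \FR$_m$ profile for a sufficiently large finite~$m$.

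\smallskip\noindent\textbf{Part (b).} Here I would reduce from \emph{3-SAT} (or from a nonconvex quadratic feasibility problem into which 3-SAT embeds), using exactly \emph{two} agents on a strongly connected \emph{aperiodic} graph with the trivial coloring. The key observation is that under the trivial coloring $\Freq_v$ is not $\mu_1(v)+\mu_2(v)$ but the inclusion--exclusion value $\mu_1(v)+\mu_2(v)-\Delta(v)$, where $\Delta(v)$ is the frequency of \emph{simultaneous} visits of both agents to $v$; for memoryless profiles on an aperiodic strongly connected graph the two induced chains are independent and convergent, so $\Delta(v)=\mu_1(v)\mu_2(v)$. Meeting $\Obj$ thus becomes the \emph{nonconvex} problem of finding stationary distributions $\mu_1,\mu_2$ in the stationary-distribution polytope of the graph — a proper sub-polytope of the simplex whose shape is controlled by gadgets — with $\mu_1(v)+\mu_2(v)-\mu_1(v)\mu_2(v)\geq\Obj(v)$ for every $v$. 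Into this I encode 3-SAT: ``variable gadgets'' let each agent's stationary distribution represent a truth assignment and force the two assignments to agree, while ``clause gadgets'' have thresholds that are satisfiable exactly when the encoded assignment satisfies the clause, the product terms $\mu_1(v)\mu_2(v)$ being precisely what pushes the problem beyond the reach of the linear program that solves the single-agent case. To carry the bound over to \MR\ and to \FR$_m$ for every $m$, I would let aperiodicity kill the ``phase-shift'' trick by which two \emph{full} memoryless agents on a periodic graph could force $\Delta(v)=0$, and design the gadgets so that neither restricting to a sub-support nor adding memory states can push the unavoidable overlap below $\mu_1(v)\mu_2(v)$.

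\smallskip\noindent\textbf{Main obstacle.} In both parts the crux is the same: because the agents are non-communicating, each logical requirement — ``both quantifier branches are explored'', ``consecutive configurations match'', ``this assignment satisfies that clause'' — must be made to follow solely from the tightness of \emph{long-run-average} frequency thresholds (part (a)) or from the bilinear overlap terms (part (b)). Making these gadgets \emph{exactly} tight, rather than merely biased in the right direction, is the delicate step: one has to render the whole construction periodic so that ``one period $=$ one verified quantifier round / computation step / clause'', and then argue, using ergodic-type arguments together with the probability-one semantics of ``achieves'', that a violation on a positive fraction of periods provably breaks some threshold. A secondary obstacle, specific to (b), is controlling the \emph{periods} of the induced Markov chains, since $\Delta(v)$ depends on the relative phase of the two agents; one must either enforce aperiodicity or make the bound robust to every phase offset, and separately verify that the extra freedom of non-full \MR\ and of \FR$_m$ profiles does not circumvent the hardness.
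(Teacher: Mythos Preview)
For Part~(b) your plan is essentially the paper's: reduce from SAT, exploit that on an aperiodic strongly connected graph the joint visit frequency is $\mu_1(v)+\mu_2(v)-\mu_1(v)\mu_2(v)$, and design gadgets so that feasibility of these bilinear constraints encodes satisfiability. The paper's realization differs in one structural detail worth noting: it does \emph{not} make both agents encode a truth assignment and then force the two to agree. Instead, the graph has two ``halves'' sharing certain vertices, and a very tight global budget (the $\Obj$-values sum to $2-O(\zeta)$) forces $\sum_v\mu_1(v)\mu_2(v)\le O(\zeta)$, which pins agent~$A$ to one half and agent~$B$ to the other. Only agent~$A$'s stationary weight on a pair $v_i$/$w_i$ then encodes the variable $q_i$, and a careful quantitative argument (with the slack $\zeta=\Theta(n^{-4}r^{-2})$) shows the clause thresholds are met iff the formula is satisfiable. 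Your ``force the two assignments to agree'' mechanism is not how the paper makes it work, and making it work that way would add an extra layer you would still have to engineer.

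For Part~(a) your approach is genuinely different from the paper's and has a real gap. The paper reduces from the \emph{CR-UAV problem} (single-UAV periodic patrolling with per-target relative deadlines), which is \PSPACE-complete already with unary constants and, crucially, has \emph{no alternation}: it merely asks whether a single infinite route exists. The reduction uses $n{+}1$ agents: one ``UAV'' agent walks the route, and for each target $v_i$ a dedicated ``timer'' agent circulates in a ring gadget of length $RD(v_i)$; the color $c_i$ is shared by $v_i$ and all non-zero timer positions, so $\Obj(c_i)=1$ is met iff the UAV always returns to $v_i$ within $RD(v_i)$ steps. A positive instance is then witnessed by periodic playback of the route, giving the claimed $\FR_m$ strengthening for free. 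Your TQBF route instead needs to simulate \emph{universal} branching, and the mechanism you sketch---``thresholds met only if both out-edges carry positive frequency, hence the objective must hold along both continuations''---does not deliver that. Forcing both edges to have positive frequency only makes the achieved $\mu$ a \emph{weighted average} over the two continuations; the probability-one semantics of ``achieves $\mu$'' requires one limit vector to exist a.s., not that $\mu\geq\Obj$ hold separately on each branch. A continuation that violates the local check can therefore be masked by visiting it with arbitrarily small relative frequency, and nothing in your outline prevents this. The CR-UAV reduction sidesteps the whole issue by picking a \PSPACE-complete source problem that is purely existential.
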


The following theorem reveals a severe obstacle for designing efficient steady-state synthesis algorithms.

\begin{theorem}
\label{thm-evaluate-hard}
    Let $D$ be a (strongly connected) graph, $\Col$ the trivial coloring, $v$ a vertex of $D$, and $\pi$ a MR profile such that $\pi$ achieves some (unknown) frequency vector $\mu$. The problem whether \mbox{$\mu(v)=1$} is \mbox{\coNP-hard}.
\end{theorem}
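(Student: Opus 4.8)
The plan is to reduce from the coNP-complete problem of deciding whether a given 3-CNF formula $\varphi$ over variables $x_1,\dots,x_n$ is unsatisfiable; we may assume each clause has three literals on three distinct variables. I will construct a \emph{deterministic} MR profile (deterministic strategies are a special case of memoryless randomized ones), so the set $\Gamma$ of time steps at which $v$ is occupied by some agent is periodic; hence $\Gamma$ has density $1$ — equivalently, $\mu(v)=1$ — iff $v$ is covered at \emph{every} step. The graph and profile are designed so that the joint position of the agents at step $j$ encodes a truth assignment $b(j)\in\{0,1\}^n$, the map $j\mapsto b(j)$ is onto, and $v$ is covered at step $j$ precisely when $b(j)$ \emph{falsifies} $\varphi$; then $\mu(v)=1$ iff every assignment falsifies $\varphi$, i.e.\ iff $\varphi$ is unsatisfiable.

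For the construction, let $p_1<\dots<p_n$ be the first $n$ primes, so $p_n=O(n\log n)$ and all quantities below stay polynomial. An agent performing a deterministic walk around a simple cycle of length $p_i$ is a ``clock modulo $p_i$''; by the Chinese Remainder Theorem, as $j$ ranges over $\{0,\dots,\prod_i p_i-1\}$ the tuple $(j\bmod p_1,\dots,j\bmod p_n)$ attains every value, so fixing for each $i$ a set $T_i\subseteq\Zset_{p_i}$ with $\emptyset\neq T_i\neq\Zset_{p_i}$ (say $T_i=\{0\}$) and putting $b_i(j)=1$ iff $j\bmod p_i\in T_i$ makes $j\mapsto b(j)=(b_1(j),\dots,b_n(j))$ surjective onto $\{0,1\}^n$. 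The graph $D$ consists of a single distinguished vertex $v$ together with finitely many simple cycles that pairwise intersect only in $v$, so $D$ is strongly connected. For each clause $C=(\lambda_a\vee\lambda_b\vee\lambda_c)$ on variables $x_a,x_b,x_c$, the set of residue triples making all three literals false is a product $U_a\times U_b\times U_c$, where $U_a$ is $T_a$ or $\Zset_{p_a}\setminus T_a$ according to the sign of $\lambda_a$ (and likewise for $b,c$); via CRT this product corresponds to a set $S_C\subseteq\Zset_{N_C}$ with $N_C:=p_ap_bp_c$. To $D$ I add a fresh cycle of length $N_C$ through $v$ and, for every $s\in S_C$, one agent walking this cycle, phased so that it is at $v$ exactly at the steps $j$ with $j\equiv s\pmod{N_C}$. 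All strategies are memoryless, the number of agents and vertices is polynomial, and since the joint walk is periodic, $\pi$ achieves a well-defined frequency vector $\mu$ under the trivial coloring.

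Correctness then rests on the identity: at step $j$, some agent is at $v$ iff $j\bmod N_C\in S_C$ for some clause $C$, iff $b(j)$ makes all literals of some clause false, i.e.\ iff $b(j)\not\models\varphi$. Consequently, if $\varphi$ is unsatisfiable then every step is covered and $\mu(v)=1$. Conversely, if $b^\ast\models\varphi$, choose $j^\ast$ with $b(j^\ast)=b^\ast$; then no agent is at $v$ at step $j^\ast$, and since coverage of $v$ depends only on $j$ modulo $P:=\operatorname{lcm}_{C}N_{C}$ (which divides $\prod_i p_i$), every step congruent to $j^\ast$ modulo $P$ is also uncovered, so $\mu(v)\le 1-1/P<1$. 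Hence $\mu(v)=1$ iff $\varphi$ is unsatisfiable, which proves coNP-hardness.

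The step I expect to be the main obstacle is guaranteeing that $v$ is never covered \emph{spuriously}: each added agent must pass through $v$ exactly once per period and precisely at the intended phase, so that the only possible cause of ``$v$ occupied at step $j$'' is a genuinely falsified clause — this is why the cycles share only the vertex $v$ and why the phases must be chosen with care, and getting exactly right the CRT dictionary translating ``clause $C$ is falsified at time $j$'' into ``$j\bmod N_C\in S_C$'' is the technical heart of the argument. The remaining points — strong connectivity of $D$, the divisibility $P\mid\prod_i p_i$, and the polynomial bounds on $|S_C|\le N_C$, on the number of cycles, and on the cycle lengths (all because the first $n$ primes have size $O(n\log n)$) — are routine. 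This phenomenon is exactly the one anticipated in point I(c): the agents induce Markov chains with pairwise distinct periods $p_i$, and the joint system is governed by their least common multiple, which is exponential in the number of agents.
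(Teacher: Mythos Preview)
Your proof is correct. The core mechanism is the same as the paper's: build $D$ as a bouquet of directed cycles sharing only the vertex $v$, use deterministic agents so that coverage of $v$ is periodic, and arrange the cycle lengths as products of small primes so that the joint period is exponential while each individual cycle is polynomial. The Chinese Remainder Theorem then translates ``$v$ is uncovered at some step'' into a satisfiability condition.

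Where you differ from the paper is in the encoding and in the level of indirection. The paper first isolates an intermediate problem it calls \textit{mod-SAT} (given pairs $(n_i,S_i)$, decide whether some $x$ has $x\bmod n_i\in S_i$ for all $i$), proves it \NP-complete by a reduction from 3-SAT that assigns one prime \emph{per clause} and uses \emph{pairs} of primes to enforce non-conflict of the selected literals, and only then reduces mod-SAT to the evaluation problem via the cycle construction. You instead assign one prime \emph{per variable}, use \emph{triples} of primes (one per clause) to detect falsified clauses, and reduce directly from 3-UNSAT with no intermediate problem. Your route is shorter and arguably more transparent; the paper's route buys a clean reusable lemma (mod-SAT is \NP-complete) and keeps the final reduction to the evaluation problem completely generic. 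Both encodings give polynomial-size instances for the same reason ($p_n=O(n\log n)$), and both rely on exactly the phenomenon you name at the end: the lcm of the agents' periods is exponential in their number.
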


According to Theorem~\ref{thm-evaluate-hard}, MR strategy profiles are not only hard to construct, but they are also hard to \emph{evaluate}. 

Finally, we give upper complexity bounds on the steady-state synthesis problem.

\begin{theorem}
    Let $k \geq 1$ be a fixed constant. Given an MDP $D$, a coloring $\Col$, a frequency vector $\Obj$, and $m \geq 1$, the problem whether there exists an FR$_m$ strategy profile for $k$ agents achieving $\mu \geq \Obj$ is in \PSPACE\ (assuming the unary encoding of $m$). 
\end{theorem}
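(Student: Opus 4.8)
The plan is to reduce to the case of MR profiles and then encode the existence of a suitable MR profile as a \emph{purely existential} sentence of the first-order theory of the reals whose size is polynomial in the input when $k$ is fixed (and $m$ unary); its truth can then be decided in \PSPACE\ by Canny's algorithm for the existential theory of the reals~\cite{Canny:Tarski-exist-PSPACE}.

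First I would pass from FR$_m$ profiles on $D$ to MR profiles on the augmented MDP $D'$ with vertex set $V' = V \times \{1,\ldots,m\}$ and lifted coloring $\Col'(v,\ell) = \Col(v)$, as described in Section~\ref{sec-model}: since $m$ is unary, $D'$ has polynomially many vertices, and an FR$_m$ profile for $k$ agents achieving $\mu \geq \Obj$ on $D$ exists iff an MR profile for $k$ agents achieving $\mu \geq \Obj$ on $D'$ does (runs and multi-runs project between $D$ and $D'$, preserving the color seen at each step). So it suffices to decide, for a polynomial-size MDP and coloring, whether there is an MR profile $\pi = (\kappa_1,\ldots,\kappa_k)$ that achieves some $\mu \geq \Obj$. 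The next observation is that $\pi$ induces the product Markov chain $\mathcal{C}$ on $Q = (V')^k$ with $P(s,t) = \prod_{i=1}^k \kappa_i(s_i)(t_i)$ and initial state $s_0 = (v_0^1,\ldots,v_0^k)$; since $k$ is a fixed constant, $|Q|$ is polynomial. Almost every multi-run eventually stays in a single BSCC $B$ of $\mathcal{C}$, and then by the Cesàro ergodic theorem (valid regardless of the period of $B$) the frequency of each color $c$ equals $\sum_{t\in B,\ \exists i\,\Col'(t_i)=c} \iota_B(t)$, where $\iota_B$ is the invariant distribution of $B$. Hence $\pi$ achieves some $\mu \geq \Obj$ iff every BSCC of $\mathcal{C}$ reachable from $s_0$ has this color-frequency vector equal to one and the same $\mu$, and $\mu \geq \Obj$.

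Then I would assemble the sentence. Its existentially quantified real variables are: the probabilities $\kappa_i(v)(u)$ for $i\in\{1,\dots,k\}$ and $(v,u)\in E'$; auxiliary $\{0,1\}$-valued variables (forced by constraints of the form $b(b-1)=0$ together with defining inequalities) naming, for the transition graph of $\mathcal{C}$, the predicates ``$(s,t)$ is an edge'' (i.e.\ $P(s,t)>0$), ``$t$ is reachable from $s$'', ``$s,t$ lie in a common SCC'', and ``$s$ lies in a BSCC'', all computed by the usual bounded-round transitive-closure / Bellman--Ford propagation so that the whole Boolean skeleton has size $\mathrm{poly}(|Q|)$ and no case split over supports is needed; for every $r\in Q$ a vector $y^r = (y^r_t)_{t\in Q}$ intended to be the invariant distribution of the BSCC of $r$; and the target vector $(\mu_c)_{c\in\cols}$. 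The constraints state: each $\kappa_i$ is a valid MR strategy on $D'$ (including $\sum_{\ell'}\kappa_i(v,\ell)(u,\ell') = p(v)(u)$ on stochastic vertices, cf.\ Section~\ref{sec-model}); each auxiliary $\{0,1\}$-variable equals the value of the predicate it names; whenever $r$ is reachable from $s_0$ and lies in a BSCC, the vector $y^r$ is non-negative, supported on $\{t : \mathrm{SameSCC}(r,t)\}$, sums to $1$, satisfies $y^r = y^r P$, and fulfils $\sum_{t:\exists i\,\Col'(t_i)=c} y^r_t = \mu_c$ for every $c$; and finally $\mu_c \geq \Obj(c)$ for all $c$. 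Since a BSCC is irreducible, its invariant distribution is unique, so these constraints \emph{force} each relevant $y^r$ to equal $\iota_{B(r)}$, and the sentence is true iff a profile as desired exists. All polynomials have degree $O(k)$, the number of variables and the Boolean skeleton are polynomial for fixed $k$, and the sentence is $\exists$-only, so Canny's \PSPACE\ procedure finishes the argument.

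I expect the main obstacle to be that the transition graph of $\mathcal{C}$ is not fixed in advance: it depends on which strategy probabilities are positive, so the BSCC decomposition — on which the achieved frequency vector depends — cannot be read off statically. The remedy is to express reachability and BSCC-membership using polynomially many auxiliary $\{0,1\}$-variables propagated over $O(|Q|)$ rounds, rather than disjoining over the exponentially many candidate supports; this is what keeps the formula of polynomial size. A secondary point requiring care is that the constraints must \emph{pin down} (not merely allow) the auxiliary variables and each $y^r$ to their intended values; for the $y^r$ this is exactly where uniqueness of the invariant distribution of an irreducible chain, together with closedness of a BSCC, is used, and it is also the reason one must quantify over \emph{all} reachable BSCCs to capture almost-sure constancy of the frequency vector.
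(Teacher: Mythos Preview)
Your proposal is correct but takes a genuinely different route from the paper. The paper's proof proceeds nondeterministically (using $\mathsf{NPSPACE}=\mathsf{PSPACE}$): it guesses for each agent $i$ an end component $D_i$ of $D'$ and an initial vertex $v_i\in D_i$, so that $\sigma_i$ becomes a \emph{full} MR strategy on the strongly connected $D_i$; it then computes the per-agent periods $d_i$ and their least common multiple $d\le |V'|^k$, and expresses the achieved frequency $\mu(c)$ via the periodic-class formula (the analogue of~\eqref{eq:periodic}), obtaining a polynomial-size existential real sentence for fixed~$k$. You instead build the \emph{product} Markov chain on $(V')^k$ and encode its support, reachability relation, BSCC structure, and per-BSCC invariant distributions inside a single existential sentence, using auxiliary $\{0,1\}$-variables pinned down by bounded-round propagation; the Ces\`aro ergodic theorem then yields $\mu$ directly from each reachable BSCC without any periodicity analysis. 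Your approach avoids both the nondeterministic guess and the period/LCM arithmetic, at the price of a considerably larger formula (you need $|V'|^{O(k)}$ auxiliary variables and constraints to express reachability and SCC membership symbolically over an a~priori unknown support). The paper's approach stays close to the evaluation machinery of Section~\ref{sec-algorithm} and gives a smaller formula after the guess; yours is more self-contained and handles arbitrary (non-full) MR strategies uniformly. One small omission in your write-up: the initial tuple $s_0=(v_0^1,\ldots,v_0^k)$ is part of the profile and must itself be selected; since there are only $|V'|^k$ candidates, you can either guess it or fold a finite disjunction over all choices into the sentence.
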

\section{Steady-State Synthesis Algorithm}
\label{sec-algorithm}


\paragraph{MDP Normal Form}
We start by observing that in the context of steady-state synthesis, we can safely assume that the input MDP $D$ takes the form $\bigcup_{q=1}^m D_q$ where 
$D_1,\ldots,D_m$ are strongly connected MDPs with pairwise disjoint sets of vertices
(we say that $D$ is in \emph{normal form}).

To see this, consider (some) MDP $D$. A \emph{maximal end component (MEC)} of $D$ is a maximal strongly connected \mbox{sub-MDP} of~$D$. The set $\{D_1,\ldots,D_m\}$ of all MECs of $D$ is computable efficiently \cite{CHH:MEC-decomposition-JACM}, and $D_1,\ldots,D_m$ can be seen as strongly connected MDPs with pairwise disjoint sets of vertices. It can be shown that for an arbitrary (HR) strategy on~$D$, almost all runs eventually enter and stay in some MEC. Since the finite prefix of a run executed before entering the MEC does not influence the achieved frequency vector, we can safely assume that all runs are initiated in some $D_q$ and never leave it. Thus, the steady-state synthesis problem for $D$ can be reformulated as the steady-state synthesis problem for $\bigcup_{q=1}^m D_q$. 
Full details of this argument are somewhat subtle and they are presented in \supp.

Suppose that $\pi$ is a strategy profile for an MDP $\bigcup_{q=1}^m D_q$ in normal form. To compute the frequency vector $\mu$ achieved by $\pi$, one is tempted to compute all frequency vectors $\mu_q$ achieved in $D_q$ by the agents assigned to $D_q$, and then put $\mu = \sum_{q=1}^{m}\mu_q$. However, this simple method works only under the assumption that vertices in different MECs have different colors (we say that $\Col$ is \emph{well-formed}). For example, this condition is satisfied when $\Col$ is the trivial coloring or when $m=1$. If $\Col$ is not well-formed, we can still conclude $\mu \leq  \sum_{q=1}^{m} \mu_q$, but the precise computation of $\mu$ may require \emph{exponential} time, even for full MR profiles. For simplicity, we consider only well-formed colorings in the rest of this section (this condition is not too restrictive and it does not influence the hardness results of Section~\ref{sec-fundaments}).

Let us also note that for MDPs in normal form and one agent, full MR profiles approximate MR profiles, which explains the remark in the second paragraph of Section~\ref{sec-intro}.


\paragraph{Evaluating Full MR Profiles}

Let $D = (V,E,p)$ be a strongly connected MDP, $\Col : V \to \cols$ a coloring, and $\pi = (\sigma_1,\ldots,\sigma_k)$ a full MR profile for~$D$.  We show how to compute the frequency vector achieved by $\pi$. Note that based on the previous discussion, this procedure can also be used to evaluate a full MR profile for an MDP $\bigcup_{q=1}^m D_q$ in normal form where the underlying coloring is well-formed (we compute the frequency vector $\mu_q$ for each $D_q$ and the agents assigned to $D_q$, and then return the sum of all $\mu_q$).

For every $i \in \{1,\ldots,k\}$, let $D^{\sigma_i} = (V,P_i,\alpha_i)$ be the Markov chain induced by $D$ and $\sigma_i = (v_i,\kappa_i)$. That is, $P_i(v,u)$ is either $\kappa_i(v)(u)$ or $p(v)(u)$ depending on whether $v \in V_N$ or $v\in V_S$, and $\alpha_i(v_i) = 1$. Since $D$ is strongly connected and every $\sigma_i$ is full, each $D^{\sigma_i}$ is irreducible and determines the \emph{same} partition of $V$ into $d \geq 1$ cyclic classes $V_0,\ldots,V_{d-1}$. We use $\Inv_i$ to denote the unique \emph{invariant} distribution of $D^{\sigma_i}$ satisfying $\Inv_i(v) = \sum_{u \in V} \Inv_i(u) \cdot P_i(u,v)$ for every $v \in V$. Furthermore, for every $c \in \cols$, we use $\Col^{-1}(c)$ to denote the pre-image of $c$ (i.e., $\Col^{-1}(c)$ is the set of all $v \in V$ such that $\Col(v)=c$).

For simplicity, let at first consider the case when $d=1$. Then, $\pi$ achieves the frequency vector $\mu$ where
\begin{equation}
    \mu(c) \quad = \quad 1\ - \ \prod_{i=1}^k \bigg(1-\sum_{v\in \Col^{-1}(c)} \Inv_i(v)\bigg)
\label{eq:aperiodic}
\end{equation}
for every $c \in \cols$. This follows directly from basic results about aperiodic irreducible Markov chains (see, e.g., \cite{Chung:book}). More concretely, for every $u \in V$, we have that $\lim_{n \to\infty} P_i^n(v_i,u) = \Inv_i(u)$. Hence, $\sum_{v\in \Col^{-1}(c)} \Inv_i(v)$ is the limit probability that agent~$i$ visits a $c$-colored vertex after $n$~steps as $n {\to} \infty$. Since the agents are independent, the product on the right-hand side of~\eqref{eq:aperiodic} is the limit  probability that \emph{none} of the $k$~agents visits a $c$-colored vertex.
Consequently, the right-hand side of~\eqref{eq:aperiodic} is the limit probability (and hence also the frequency) that \emph{at least one agent} visits a \mbox{$c$-colored} vertex. Note that~\eqref{eq:aperiodic} is independent of the initial vertices of the strategies $\sigma_1,\ldots,\sigma_k$.

If $d>1$, then the frequency vector $\mu$ depends on the initial positioning of the agents into the cyclic classes, and the above reasoning must be applied to the $d$-step matrices $P_i^d$. For every $i \in \{1,\ldots,k\}$ and $j \in \{0,\ldots,d{-}1\}$, let $V(i,j)$ be the cyclic class visited by agent~$i$ after traversing precisely $j$ edges from the initial vertex $v_i$ (in particular, $V(i,0)$ is the cyclic class containing the initial vertex $v_i$). Furthermore, for every $c \in \cols$, let 
$V^c(i,j) = V(i,j) \cap \Col^{-1}(c)$. Equation~\eqref{eq:aperiodic} is generalized into the following:
\begin{equation}
    \mu(c) = \frac{1}{d} \sum_{j=0}^{d-1}\bigg(
       1 - \prod_{i=1}^k \bigg( 1 - d \cdot\sum_{v \in V^c(i,j)} \Inv_i(v) \bigg)\bigg).
\label{eq:periodic}
\end{equation}
Note that~\eqref{eq:periodic} is computable in polynomial time.

\paragraph{The Algorithm}
For a given MDP $D = \bigcup_{q=1}^m D_q$ in normal form, a well-formed coloring $\Col$, $k \geq 1$, and a frequency vector $\Obj$, we wish to compute a full MR profile $\pi$ for $k$ agents achieving a frequency vector $\mu$ such that $\Dist(\mu,\Obj)$ is \emph{minimized}, where 
\begin{equation}
 \Dist(\mu,\Obj) = \sum_{c \in \cols} \max\{0,\Obj(c) - \mu(c)\}.
\end{equation}

A natural idea is to construct a mathematical program minimizing $\Dist(\mu,\Obj)$.
Each full MR strategy $\sigma_i$ in the desired profile $\pi$ can be encoded by variables representing the edge probabilities, and the invariant distribution  $\Inv_i$ can then be encoded by simple linear constraints. However, computing the frequency vector $\mu$ involves the \emph{non-linear} right-hand side of~\eqref{eq:periodic}, which makes the resulting program non-linear.


\algnewcommand{\Inputs}[1]{%
  \State \textbf{Inputs:}
  \Statex \hspace*{\algorithmicindent}\parbox[t]{.8\linewidth}{\raggedright #1}
}
\algnewcommand{\Outputs}[1]{%
  \State \textbf{Outputs:}
  \Statex \hspace*{\algorithmicindent}\parbox[t]{.8\linewidth}{\raggedright #1}
}
\algnewcommand{\Initialize}[1]{%
\State \textbf{Initialize:}
\Statex \hspace*{\algorithmicindent}\parbox[t]{.8\linewidth}{\raggedright #1}
}
\begin{algorithm}[t]
	\small
	\caption{Incremental Steady-State Synthesis Algorithm}
	\label{alg:inc-synthesis}
	\begin{algorithmic}
        \Inputs{MDP $D = \bigcup_{q=1}^m D_q$ in normal form\\
                Well-formed coloring $\Col : V \to \cols$\\
                Objective $\Obj : \cols \to [0,1]$\\
                Number of agents $k \geq 1$
        } 
        \Outputs{A full MR strategy profile $\pi$ for $D$ and $k$ agents} 
        \Initialize{$\pi \gets \emptyset$}  
	    \ForAll{$i \in \{1,\ldots, k\}$}
            \State $\text{BestDistance} \gets \infty$
            \ForAll{$q \in \{1,\ldots,m\}$}
                \ForAll{${\rm cyclic\ classes\ } C \in \{C_0,\ldots,C_{d_q-1}\} {\rm\ of\ } D_q$}
                    \State $\sigma \gets \textsc{StrategyOfLP}(\Obj,\pi,C,D_q)$
                    \State $\nu \gets \textsc{Evaluate}(\pi {+} \sigma,D)$
                    \If{${\rm \Dist}(\nu,\Obj) < {\rm BestDistance}}$
                        \State ${\rm BestDistance} \gets {\rm \Dist}(\nu,\Obj)$
                        \State ${\rm BestStrategy} \gets \sigma$
                    \EndIf
                \EndFor
		    \EndFor
            \State $\pi \gets \pi + {\rm BestStrategy}$
		\EndFor
		\State \Return $\pi$
	\end{algorithmic}
\end{algorithm}

To overcome this difficulty, Algorithm~\ref{alg:inc-synthesis} constructs the profile $\pi$ \emph{incrementally} by adding the agents one-by-one. Suppose that we already constructed a profile for $\ell$ agents, and we wish to compute a suitable full MR strategy $\sigma_{\ell+1} = (v_{\ell+1},\kappa_{\ell+1})$ for another agent. The algorithm examines all possible allocations for $v_{\ell+1}$, i.e., all cyclic classes $C$ in all $D_q$. For given $C$ and $D_q$, the procedure 
\textsc{StrategyOfLP} constructs the linear program of Fig.~\ref{fig-lin-prog} and returns the full MR strategy $\sigma = (v_0,\kappa)$, where $v_0 \in C$ and $\kappa(u)(v)$ is the \emph{normalized} value of $x_{u,v}$ attained by solving the program. Note that the $x_{u,v}$ variable in the LP represents the \emph{frequency} of the edge $(u,v) \in E^q$, not the probability of the edge. 
The key observation is that since the strategies $\sigma_1,\ldots,\sigma_{\ell}$ are 
\emph{fixed}, the right-hand side of~\eqref{eq:periodic} becomes \emph{linear}. In Fig.~\ref{fig-lin-prog}, we use $\mathcal{X}^c_j$ to denote the \emph{constant} value of the product $\prod_{i=1}^{\ell} (1-d_c\cdot\sum_{v \in V^c(i,j)}\Inv_i(v))$, where $d_c$ denotes the period of the MEC containing the vertices of color $c$ (if there is no such vertex, we put $d_c=1$), $V^c(C,j)$ denotes the set of all $c$-colored vertices in the cyclic class of $D_q$ visited after traversing precisely $j$ edges from a vertex of~$C$. 

After computing the strategy $\sigma$, Algorithm~\ref{alg:inc-synthesis} proceeds by evaluating the profile $\pi+\sigma$ obtained by appending $\sigma$ to $\pi$. If the frequency vector achieved by this profile is better than the frequency vectors achieved for all $\sigma$'s computed so far, the current $\sigma$ is set as a new candidate for $\sigma_{\ell+1}$. 
Algorithm~\ref{alg:inc-synthesis} terminates after constructing a profile for all $k$~agents.

\begin{figure}[t]
\setlength{\tabcolsep}{2pt}\small
\begin{tabular}{l r c l @{\hskip 5.5mm}l}
    \multicolumn{5}{l}{\textbf{min } $\textit{Dist}(\mu,\Obj)$}\\[1ex]
    \multicolumn{5}{l}{\textbf{subject to}}\\[.5ex]    
    & $x_{u,v}$  & $\in$ & $(0,1]$, &  $(u,v)\in E^q$,\\[1ex]
    & $\displaystyle\sum_{(u,v) \in E^q}x_{u,v}$ & $=$ & $1$,\\[4ex]
    & $\displaystyle\sum_{(v,u) \in \Out(v)} x_{v,u}$ & $=$ &  $\displaystyle\sum_{(u,v) \in \In(v)} x_{u,v}$, & $v \in V^q$,\\[3ex]
    \multicolumn{5}{l}{%
     $x_{v,w}$  $=$   $p^q(v)(w) \cdot \displaystyle\sum_{(u,v) \in \In(v)} x_{u,v}$, \hspace*{2em} $v \in V^q_S$, $(v,w) \in \Out(v)$,}\\[3ex]
     \multicolumn{5}{l}{$\mu(c) = 
    \displaystyle\frac{1}{d_c}\cdot 
   \sum_{j=0}^{d_c-1}\bigg(1 - \mathcal{X}^c_j\cdot \bigg(1-d_c\cdot\sum_{v \in V^c(C,j)}\sum_{(u,v) \in \In(v)} x_{u,v}\bigg)\bigg)$}
\end{tabular}
\caption{The linear program for $\Obj,\pi,C,D_q = (V^q,E^q,p^q)$.
}
\label{fig-lin-prog}
\end{figure}
\section{Experimental Evaluation}
\label{sec-experiments}

The main goal of our experiments is to evaluate the quality of the strategy profiles constructed by Algorithm~\ref{alg:inc-synthesis}. We also assess the efficiency of Algorithm~\ref{alg:inc-synthesis}. 
Additional analyses of the results and some additional plots are
in \supp. The reproduction package for the evaluation is available from Zenodo~\cite{ijcai25_artifact}.

\paragraph{Benchmarks}
For simplicity, we perform our experiments on graphs.
This does not affect efficiency since stochastic vertices do not add any extra computational costs.
Moreover, it does not affect the quality comparison between the baseline and the incremental synthesis procedure of Algorithm~\ref{alg:inc-synthesis}.

To avoid any systematic bias, we randomly generated two
families of strongly connected input graphs: aperiodic and periodic. For
aperiodic graphs, we randomly generated graphs with up to 400 vertices and an
edge between each pair of vertices with probability 0.01. For periodic graphs,
we randomly generated structures of Fig.~\ref{fig-cyclic} with
$d \in \{ 5,10,15,20 \}$ cyclic classes, at most $20$ vertices in each cyclic
class, and an edge between each two vertices from neighboring cyclic classes
with probability $0.6$. We considered only graphs that are strongly connected.
For each graph, we randomly generated $5$ objectives with at most $30$ colors
and target values $\Obj(c)$ from $\{ 0, 0.1, 0.2, \ldots, 0.9 \}$ and randomly
assigned a color to each vertex. In this way, we obtained 2000 aperiodic and
1600 periodic benchmarks (i.e., combinations of a graph and an objective) with
at most 400 vertices.

\paragraph{Baseline}

Since the steady-state synthesis problem for $k {\geq} 2$ agents is computationally hard (see Section~\ref{sec-fundaments}), we cannot compare the quality of profiles constructed by Algorithm~\ref{alg:inc-synthesis} against the optimal solutions as there is no feasible way to determine them. 
However, we can still compare Algorithm~\ref{alg:inc-synthesis} with a \emph{straightforward} synthesis procedure based on sharing the strategy computed for \emph{one} agent.
In some cases, this simple method even leads to optimal solutions. For example, if $k$ agents move along a directed ring consisting of $n \geq k$ vertices, they can achieve the frequency vector $(k/n,\ldots,k/n)$ by sharing the same strategy (``walk along the ring'') so that each agent starts in a different vertex.

The baseline synthesis procedure works as follows. We start by computing a full MR strategy $\sigma$ for a single agent, optimizing a suitably defined value. Then, $k$ agents that use the strategy $\sigma$ are allocated to the cyclic classes $C_0,\ldots,C_{d-1}$ by Round Robin assignment. More specifically, the strategy $\sigma$ is constructed by a LP \emph{maximizing} $\AltDist(\Obj,\nu)$ where $\AltDist(\Obj,\nu) = \min_{c \in \Col}{\frac{\nu(c)}{\Obj(c)}}$ and $\nu(c) = \sum_{v \in \Col^{-1}(c)} \Inv(v)$. Intuitively, the goal is to cover all the colors in proportion to their target values. We maximize $\AltDist$ instead of minimizing $\Dist$ because in our preliminary experiments, the performance of the algorithm based on minimizing $\Dist$ was significantly worse. More concretely, when using $\Dist$, the strategy $\sigma$ tended to focus on a subset of colors, which also caused the resulting strategy profile to focus only on some colors.

\paragraph{Implementation and experimental setup} We implemented both
algorithms in a simple open source Python tool that uses Gurobi~\cite{gurobi} to
solve the linear programming problems. The tool is available from GitLab\footnote{\url{https://gitlab.fi.muni.cz/formela/multi-agent-steady-state-synthesis}}. We executed both algorithms on each
benchmark with timeout 120 seconds of wall time on a Linux computer with
AMD Ryzen 7 PRO 5750G CPU and 32 GB of RAM.

\begin{figure}[tb]
  \centering
  \includegraphics[width=\linewidth]{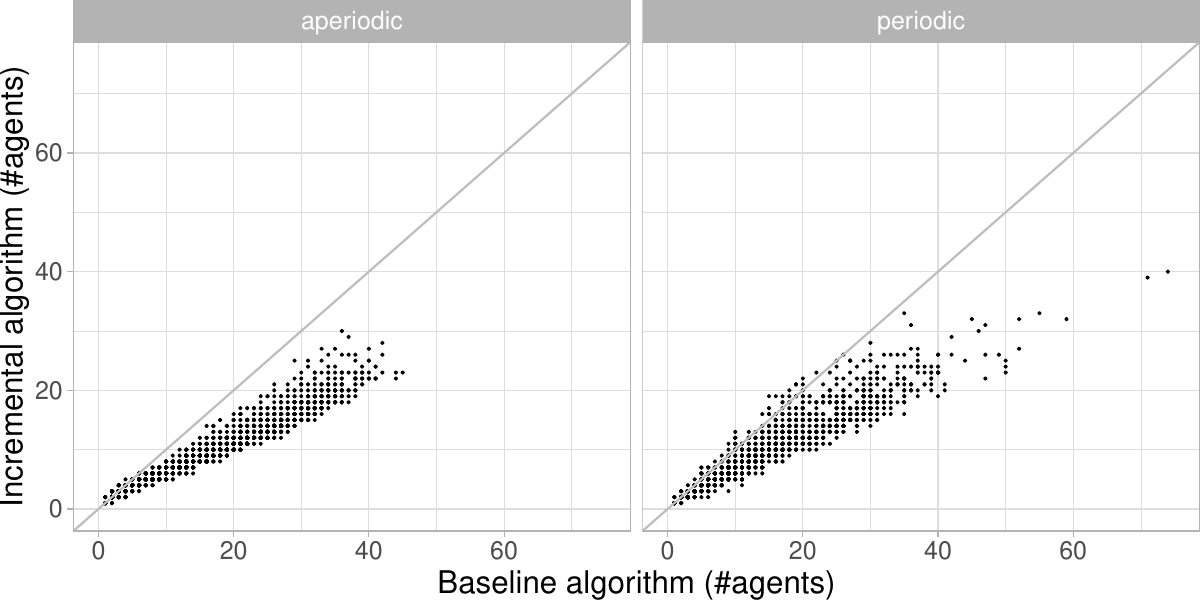}
  \caption{Numbers of agents sufficient to satisfy the objective using each of the algorithms (lower is better). Each point $(x,y)$ is a benchmark for which the objective is satisfied by $x$ agents by the baseline algorithm, and $y$ agents by Algorithm~\ref{alg:inc-synthesis}. Divided by the type of the graph (aperiodic/periodic).}
  \label{fig:agents-scatter}
\end{figure}

\paragraph{Quality of strategies} For each benchmark, we compared the two
algorithms with respect to the number of agents that is necessary to satisfy
the objective. The results are presented in
Fig.~\ref{fig:agents-scatter}. Algorithm~\ref{alg:inc-synthesis} often requires significantly fewer agents to satisfy the objective.
Numerically, Algorithm~\ref{alg:inc-synthesis} required fewer agents on $3163$ of the benchmarks and more on $85$ benchmarks. It also required only $10.40$ agents on
average, compared to $16.65$ agents needed by the baseline. The improvement
occurs both for periodic and aperiodic input graphs, which shows that the main
benefit is not due to the smarter initial assignment of agents to
cyclic classes but because of the core approach of incremental
addition of agents.

We also investigated the distances achieved by the strategy profiles for fewer agents
than necessary to satisfy the objective. This is presented in
Fig.~\ref{fig:distances-during-computation} on a randomly selected subset of
benchmarks. The plot again shows that Algorithm~\ref{alg:inc-synthesis}
satisfies the objective with significantly fewer agents. More importantly, it
shows that for most benchmarks, the profiles synthesized by Algorithm~\ref{alg:inc-synthesis} are better for all numbers of agents smaller than the ones needed by any of the algorithms.

The experiments show that Algorithm~\ref{alg:inc-synthesis} can satisfy objectives
with significantly fewer agents, if enough agents are available. Additionally,
when the number of available agents is insufficient, Algorithm~\ref{alg:inc-synthesis} in most
cases achieves a smaller distance to satisfying the objectives than the
baseline.

\begin{figure}[tb]
  \centering
  \includegraphics[width=\linewidth]{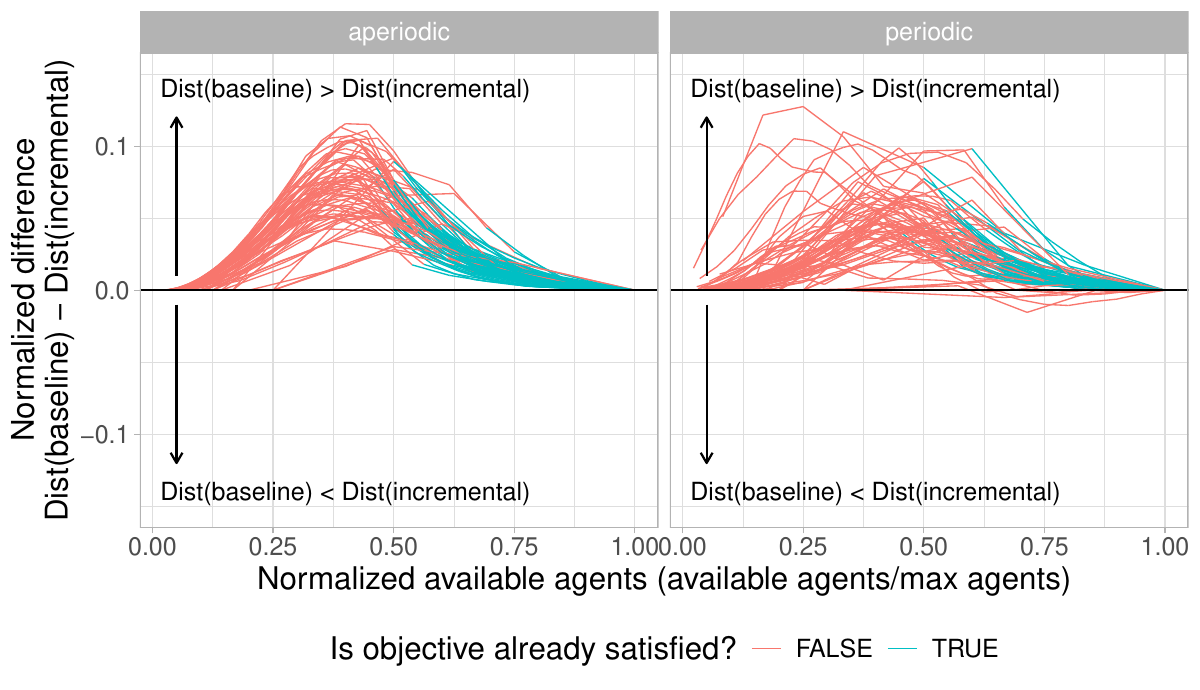}
  \caption{Comparison of distances achieved by the two algorithms on a randomly selected subset of 150 benchmarks. Each line represents a benchmark. The $y$-axis shows the difference of the normalized distances $\frac{\Dist(\pi_{\textrm{baseline}}, \Obj)}{|\cols|} - \frac{\Dist(\pi_{\textrm{incremental}}, \Obj)}{|\cols|}$ between the obtained profiles $\pi_{\textrm{baseline}}$ and $\pi_{\textrm{incremental}}$ for $k$ agents. The $x$-axis shows the number $k$ of agents between $0$ and the number sufficient for both of the algorithms, normalized between $[0,1]$. The line is colored blue if any of the algorithms has already satisfied the objective. Divided by the type of the graph (aperiodic/periodic).}
  \label{fig:distances-during-computation}
\end{figure}

\paragraph{Efficiency}

\begin{table}\small
    \centering
    \begin{tabular}{llrrr}
      \toprule
      & & \multicolumn{3}{c}{Wall time (s)} \\
      \cmidrule{3-5}
      Benchmarks & Algorithm & mean & median & max \\
      \midrule
      \multirow{2}{*}{Aperiodic} & Baseline & 0.10 & 0.09 & 0.25 \\
                                 & Incremental & 0.97 & 0.71 & 5.08\\
      \midrule
      \multirow{2}{*}{Periodic} & Baseline & 0.03 & 0.03 & 0.11 \\
                                & Incremental & 3.57 & 1.88 & 28.87 \\      
      \bottomrule
    \end{tabular}
    \caption{Wall times of both algorithms, divided by the type of the graph
      (aperiodic/periodic).}
    \label{tab:efficiency}
\end{table}

We also measured the runtime for both algorithms. The measured wall times are
summarized in Table~\ref{tab:efficiency}. The table shows that the mean wall
time of the baseline algorithm is significantly better than the one of Algorithm~\ref{alg:inc-synthesis}. This is 
not surprising as the main bottleneck of both algorithms is LP solving
and the baseline algorithm requires only one call
of the LP solver per benchmark, whereas the incremental
algorithm requires $k \cdot d$ calls, where $k$ is the number of agents
and $d$ is the period of the input graph. Nevertheless, all executions of our
algorithm finished within $5.08$ seconds on aperiodic benchmarks and within
$28.88$ seconds on periodic benchmarks, which is not prohibitive in practice.

\paragraph{Discussion} Even though Algorithm~\ref{alg:inc-synthesis} is less
efficient than the naive baseline algorithm, the performance is not prohibitive
in practice and it achieves the objectives with significantly
fewer agents. In our view, this is more important metric; few additional seconds
to synthesize the strategy profile is cheap, whereas each extra agent can be far more
costly.


\section*{Conclusions}
We have extended steady-state synthesis to multiagent setting and presented an efficient synthesis algorithm. The main challenges for future work include tackling the synthesis of (non-full) MR profiles and extending the whole approach to more general classes of infinite-horizon objectives.

\newpage
\section*{Acknowledgments}
The work received funding from the European Union’s Horizon Europe program under the Grant Agreement No.\ 101087529.

\bibliographystyle{named}
\bibliography{str-long,concur}

\appendix
\newpage

\section{Proofs for Section~3}
\subsection{A Proof of Theorem 1}
Recall the graphs $D_1$, $D_2$, $D_3$ of Fig.~3 in the main body of the paper.

We start by demonstrating $\FMR(D_1,2) \prec \MR(D_1,2)$. Let $\mu=(1,1)$ and $\varepsilon = \frac{1}{4}$. Note that $\mu$ is achievable by a trivial MR profile for two agents. 
%
%
%
Let $(\xi_1, \xi_2)$ be an arbitrary FMR profile achieving a frequency vector $\nu$. We  show that $L_\infty(\mu - \nu) \geq \varepsilon$. Observe that for the graph $D_1$, the Markov chains induced by $\xi_1$ and $\xi_2$ are irreducible and aperiodic. Let $\alpha_1$, $\alpha_2$ be the corresponding invariant distributions. 
Observe that none of the two agents is able to visit $v_2$ more often than in every second step. Since the invariant distribution corresponds to the long-run frequency vector, we have that $\alpha_1(v_2) \leq \frac{1}{2}$, $\alpha_2(v_2)\leq \frac{1}{2}$. The frequency of visits to $v_2$ by some of the two agents can be expressed as 
\[
  1-(1-\alpha_1(v_2))(1-\alpha_2(v_2))\,.
\]  
Thus, we obtain 
\begin{eqnarray*}
  \nu(v_2) & = & 1-(1-\alpha_1(v_2))(1-\alpha_2(v_2))\\
  &\leq& 1-\left(1-\frac{1}{2}\right)\left(1-\frac{1}{2}\right)=\frac{3}{4}
\end{eqnarray*}  
and therefore 
\begin{eqnarray*}
 L_\infty(\mu - \nu) & = & \max\left\{\left|\mu(v_1)-\nu(v_1)\right|, \left|\mu(v_2)-\nu(v_2)\right|\right\}\\
 &\geq& \mu(v_2)-\nu(v_2) \geq 1-\frac{3}{4}=\frac{1}{4} = \varepsilon\,.
\end{eqnarray*}

Now we show $\MR(D_2,2) \prec \FR_2(D_2,2)$. We put $\mu=(1,\frac{1}{2},\frac{1}{2})$ and $\varepsilon = \frac{1}{9}$. A FR$_2$ profile achieving $\mu$ is described already in the main body of the paper (the two agents start in augmented vertices $(u_1,1)$, $(u_2,1)$ and both of them then behave deterministically, transitioning between the augmented vertices in the order $(u_1,1)\mapsto(u_1,2)\mapsto(u_2,1)\mapsto(u_3,1)\mapsto(u_1,1)$).


Let $(\xi_1, \xi_2)$ be a MR profile achieving a frequency vector~$\nu$. We show that $L_\infty((1,\frac{1}{2},\frac{1}{2}) - \nu) \geq \varepsilon = \frac{1}{9}$.

If none of the two strategies assigns a positive probability to the edge $(u_1,u_1)$, then both agents just walk around the directed triangle $u_1\mapsto u_2\mapsto u_3\mapsto u_1$, and the achieved vector $\nu$ is then either $(\frac{1}{3},\frac{1}{3},\frac{1}{3})$ or $(\frac{2}{3},\frac{2}{3},\frac{2}{3})$, depending on whether the agents start in the same vertex or not. In both cases, $L_\infty((1,\frac{1}{2},\frac{1}{2})-\nu) \geq \frac{1}{3}$. Now assume that at least one of the two strategies assigns positive probability to the edge $(u_1,u_1)$. Observe that the Markov chains induced by $\xi_1$, $\xi_2$ have only one BSCC (it may contain either only $u_1$ or all of the three vertices). Let $\alpha_1$, $\alpha_2$ be the corresponding invariant distributions. Since at least one of the Markov chains is aperiodic, we obtain that 
\begin{eqnarray*}
  \nu(u_i) & = & 1-(1-\alpha_1(u_i))(1-\alpha_2(u_i))\\
  &=& \alpha_1(u_i)+\alpha_2(u_i)-\alpha_1(u_i)\alpha_2(u_i)
\end{eqnarray*}  
for each $i\in\{1,2,3\}$. In the rest of this proof, let $x=\alpha_1(u_1)$ and $y=\alpha_2(u_1)$. 
Observe that $\frac{1}{3}\leq x\leq 1$, $\frac{1}{3}\leq y\leq 1$, because both agents must visit $u_1$ at least once in every three consecutive steps. Furthermore, the frequency of visits to $u_2$ is equal to the frequency of visits to $u_3$ (this holds for both agents). 
Thus, we get 
\begin{eqnarray*}
 \alpha_1(u_2) & = & \alpha_1(u_3)=\frac{1-x}{2}\,,\\
 \alpha_2(u_2) & = & \alpha_2(u_3)=\frac{1-y}{2}.
\end{eqnarray*}
For the sake of contradiction, assume $L_\infty((1,\frac{1}{2},\frac{1}{2})-\nu) < \frac{1}{9}$. It follows that $\nu(u_1)>1{-}\frac{1}{9}=\frac{8}{9}$ and  $\nu(u_2)>\frac{1}{2}{-}\frac{1}{9}=\frac{7}{18}$. 
We can thus derive 
\begin{eqnarray*}
  \nu(u_1) &=& \alpha_1(u_1)+\alpha_2(u_1)-\alpha_1(u_1)\alpha_2(u_1)\\
  &=& x+y-xy \ > \ \frac{8}{9}\,,\\
  \nu(u_2) &=& \alpha_1(u_2)+\alpha_2(u_2)-\alpha_1(u_2)\alpha_2(u_2)\\
  &=& \frac{1-x}{2}+\frac{1-y}{2}-\frac{1-x}{2}\frac{1-y}{2}\\
  &>& \frac{7}{18}\,,
\end{eqnarray*} 
where the latter inequality implies $x+y+xy<\frac{13}{9}$.
It follows that 
\begin{eqnarray*}
  xy &=&\frac{1}{2}((x+y+xy)-(x+y-xy))\\
  &<& \frac{1}{2}\left(\frac{13}{9}-\frac{8}{9}\right)=\frac{5}{18}\,.
\end{eqnarray*}
Hence,  $x\frac{1}{3}\leq xy<\frac{5}{18}$ and $x< \frac{5}{6}$. From 
\[
  \frac{8}{9}<x+y-xy=x+y(1-x)
\] and 
\[  
  x+y(1+x)=x+y+xy<\frac{13}{9}\,,
\]
we can derive
{\small
\begin{eqnarray*}
 \frac{8}{9}(1+x) &<& x(1+x)+y(1-x)(1+x)\,,\\
  x(1-x)+y(1-x)(1+x)&<& \frac{13}{9}(1-x)\,,
\end{eqnarray*}}
obtaining
\[
  \frac{8}{9}(1+x)-x(1+x)<y(1-x)(1+x)< \frac{13}{9}(1-x)-x(1-x)\,,
\]
which leads to
\begin{eqnarray*}
 0 & < &\frac{13}{9}(1-x)-x(1-x)-\frac{8}{9}(1+x)+x(1+x)\\
 &=& 2x^2-\frac{7}{3}x+\frac{5}{9} \ =\ 2(x-\frac{1}{3})(x-\frac{5}{6})\,,
\end{eqnarray*}
implying that either $x>\frac{5}{6}$ 
or $x<\frac{1}{3}$. 
In both cases, we obtain a contradiction.


It remains to prove that $\FR_m(D_3,2) \prec \HR(D_3,2)$ for all $m \geq 1$. Let $m \geq 1$ be the number of memory states. Furthermore, let $\mu=(\frac{2}{3},\frac{2}{3},\frac{2}{3},0)$ and $\varepsilon_m = \frac{1}{21m+147}$. 

In this paragraph, we describe two HR strategies such that the HR profile consisting of these two strategies achieves the frequency vector $\mu$.  In the first strategy, the initial vertex is $w_1$. The agent then behaves deterministically, repeating the following iteration for each $n\in\{1,2,3,\ldots\}$ in increasing order: the agent takes $n$ steps, each time using the self-loop on $w_1$, takes a step to $w_2$, takes $n$ steps, each time using the self-loop on $w_2$, takes a step to $w_3$, takes $n$ steps, each time using the self-loop on $w_3$, takes a step to $w_4$, takes a step to $w_1$. In the second strategy, the initial vertex is $w_2$. The agent then behaves deterministically, repeating the following iteration for each $n\in\{1,2,3,\ldots\}$ in increasing order: the agent takes $n$ steps, each time using the self-loop on $w_2$, takes a step to $w_3$, takes $n$ steps, each time using the self-loop on $w_3$, takes a step to $w_4$, takes a step to $w_1$, takes $n$ steps, each time using the self-loop on $w_1$, takes a step to $w_2$.

Since both strategies are deterministic, there is only one possible multi-run determined by the HR profile consisting of these two strategies. Let $n\in\mathbb{N}_+$. Observe that the number of steps performed during the $n$-th iteration is equal to $3n+4$ for each of the two agents, implying that the agents are starting and finishing the iterations simultaneously. During the $n$-th iteration, each of the vertices $w_1$, $w_2$, $w_3$ is visited $2(n+1)$ times (the agents never meet in the same vertex) and the vertex $w_4$ is visited $2$ times. After $k$ complete iterations, $w_4$ has been visited $2k$ times, and each of the remaining three vertices has been visited $\sum_{n=1}^k2(n+1)=k^2+3k$ times, 
and the total number of steps from the beginning is 
\[
 \sum_{n=1}^k(3n+4)=\frac{3}{2}k(k+1)+4k=\frac{3}{2}k^2+\frac{11}{2}k\,.
\] 
The relative frequency of visits to the vertex $w_4$ after $k$ complete iterations is 
\[
 \frac{2k}{\frac{3}{2}k^2+\frac{11}{2}k}
\] 
and for each of the three remaining vertices, the relative frequency is equal to
\[ 
 \frac{k^2+3k}{\frac{3}{2}k^2+\frac{11}{2}k}\,.
\] 
As $k \to \infty$, the vector of frequencies achieved after $k$ complete iterations approaches $\mu=(\frac{2}{3},\frac{2}{3},\frac{2}{3},0)$. Since the number of steps performed in the course of the $k$-th iteration is $3k+4$, which is only a linear term, it follows that the limit exists even if we consider relative frequencies in all finite prefixes of the multi-run (including prefixes containing unfinished iterations at the end). Hence, the described HR profile achieves $\mu$.

Now let $\pi=(\xi_A, \xi_B)$ be an arbitrary FR$_m$ profile for two agents (referred to as `agent A' and `agent B' in the rest of this proof) achieving the frequency vector $\nu$. We show that 
\[
  L_\infty\left(\left(\frac{2}{3},\frac{2}{3},\frac{2}{3},0\right)-\nu\right) \ \geq \ \frac{1}{21m+147} \ = \ \varepsilon_m\,.
\]  
By definition, almost all multi-runs corresponding to $\pi$ must have the long-run average frequency vector equal to $\nu$. We can assume wlog that each agent starts in an augmented vertex in some BSCC in the Markov chain induced by the corresponding strategy ($\xi_A$ or $\xi_B$). 
Let $\pi_A$ and $\pi_B$ be the invariant distributions of these BSCCs.


For the rest of this proof, let $W_i=\{w_i\}\times \{1,\ldots,m\}$ and $\pi_X(w_i)=\sum_{v\in W_i}\pi_X(v)$ for all $i\in\{1,2,3,4\}$ and $X\in\{A,B\}$ (recall that $w_1,\ldots,w_4$ are the vertices of $D_3$). For the sake of contradiction, assume (for the rest of the proof) that 
\[
  L_\infty\left(\left(\frac{2}{3},\frac{2}{3},\frac{2}{3},0\right)-\nu\right) < \varepsilon_m\,,
\] 
implying $\nu(w_j)> \frac{2}{3}-\varepsilon_m$ for each $j\in\{1,2,3\}$. 
Since the sum of all entries of $\nu$ cannot exceed the number of agents, we obtain 
$\nu(w_4)<3\varepsilon_m$.

In this paragraph, we prove that there exists $i\in\{1,2,3\}$ such that 
$\pi_A(w_i)>\frac{1}{7}$ and $\pi_B(w_i)>\frac{1}{7}$. For the sake of contradiction, assume (only for this paragraph) that for some agent $X\in\{A,B\}$ there is at most one $i\in\{1,2,3\}$ such that $\pi_X(w_i)>\frac{1}{7}$. Then, there are $j,k\in\{1,2,3\}$, $j\neq k$ such that $\pi_X(w_j)\leq\frac{1}{7}$, $\pi_X(w_k)\leq\frac{1}{7}$, implying that for the other agent $Y$ we get 
\begin{eqnarray*}
  \pi_Y(w_j) &>& \frac{2}{3}-\varepsilon_m-\pi_X(w_j)\\
  &\geq & \frac{2}{3}-\varepsilon_m-\frac{1}{7}\\
  &=&\frac{11}{21}-\frac{1}{21m+147}\\
  &\geq& \frac{11}{21}-\frac{1}{168}>\frac{1}{2}\,.
\end{eqnarray*}   
Similarly, we obtain $\pi_Y(w_k)>\frac{1}{2}$, which yields a contradiction (it cannot be that $\pi_Y(w_j)+\pi_Y(w_k)>1$, because $\pi_Y$ is a distribution). Thus, there are at least two such $i\in\{1,2,3\}$ for each of the two agents, implying that there is at least one $i\in\{1,2,3\}$ such that $\pi_A(w_i)>\frac{1}{7}$ and $\pi_B(w_i)>\frac{1}{7}$.

For the rest of the proof, 
let us fix $i\in\{1,2,3\}$ such that $\pi_A(w_i)>\frac{1}{7}$ and $\pi_B(w_i)>\frac{1}{7}$. Observe that for every $X\in\{A,B\}$, the frequency of $X$'s visits to $w_4$ (that is, $\pi_X(w_4)$) is equal to $X$'s frequency of performing a step along the edge $(w_4,w_1)$, and the same holds also for the edges $(w_1,w_2)$, $(w_2,w_3)$, $(w_3,w_4)$. Since $\pi_X(w_4)\leq\nu(w_4)<3\varepsilon_m$, it follows that the frequency of performing the edge $(w_i,w_{i+1})$ by the agent $X$ must be less than $3\varepsilon_m$. For each $j\in\{1,2,3,4\}$, let $\chi(w_j)$ denote the frequency of simultaneous visits to $w_j$ by both agents $A$ and $B$ (the current assumptions imply that such $\chi(w_j)$ exists and it is unique). The inclusion--exclusion principle implies that $\nu(w_j)=\pi_A(w_j)+\pi_B(w_j)-\chi(w_j)$, and therefore 
\begin{eqnarray*}
  2-3\varepsilon_m & < & \sum_{j=1}^4 \nu(w_j)\\
  & =& \sum_{j=1}^4 (\pi_A(w_j)+\pi_B(w_j)-\chi(w_j))\\
  &=& 1+1-\sum_{j=1}^4 \chi(w_j)\leq 2-\chi(w_i)\,,
\end{eqnarray*}  
from which we obtain $\chi(w_i)<3\varepsilon_m$.

In this paragraph, we prove that the Markov chain induced by the strategy of agent $X$ (for all $X \in \{A,B\}$) necessarily contains a directed cycle consisting of transitions with positive probability within the augmented vertices corresponding to the vertex $w_i$ belonging to the BSCC where the agent~$X$ starts. For the sake of contradiction, assume there is no such cycle for agent~$X$. It follows that the agent $X$ can never be present in vertex $w_i$ for more than $m$ consecutive steps, therefore the frequency of performing the edge $(w_i,w_{i+1})$ by agent $X$ is at least 
\[
\pi_X(w_i)\frac{1}{m}>\frac{1}{7m}> \frac{1}{7m+49}=\frac{3}{21m+147}=3\varepsilon_m\,,
\] which contradicts the above statement that this frequency must be less than $3\varepsilon_m$.

Since 
both BSCCs where the agents start contain cycles within the augmented vertices belonging to $w_i$ (there are only $m$ such augmented vertices), it follows that the period of each of these BSCCs is at most $m$. Let $g\leq m$ be the period of the BSCC in the Markov chain induced by $\xi_A$, and let $C_0, C_1, \ldots, C_{g-1}$ be the corresponding cyclic classes.
%
Since 
\begin{eqnarray*}
 \frac{1}{7} &<& \pi_A(w_i)\\
  &=& \sum_{v\in W_i}\pi_A(v)\\
  &=& \sum_{j=0}^{g-1}\sum_{v\in C_j\cap W_i}\pi_A(v),
\end{eqnarray*}
at least one of these $g$ summands $\sum_{v\in C_j\cap W_i}\pi_A(v)$ must be greater than $\frac{1}{7 g}$. Wlog, we assume it is the summand for $j{=}0$, i.e., 
\[
   \sum_{v\in C_0\cap W_i}\pi_A(v)>\frac{1}{7 g}\,.
\]    
Our next goal is to prove for all $j\in\{0,1,\ldots, g-1\}$ a slightly weaker statement that 
\[
  \sum_{v\in C_j\cap W_i}\pi_A(v)>\frac{1}{7 g}-3\varepsilon_m\,.
\]

Let $k\in\{0,1,\ldots, g-2\}$. We obtain 
\begin{eqnarray*}
 \sum_{v\in C_{k{+}1}\cap W_i}\pi_A(v) &=& \sum_{v\in C_{k+1}\cap W_i}\sum_{u\in S}\pi_A(u)P(u,v)\\
 &\geq&\sum_{v\in C_{k+1}\cap W_i}\sum_{u\in C_k\cap W_i}\pi_A(u)P(u,v)\\
 &=& \sum_{u\in C_k\cap W_i}\pi_A(u)\sum_{v\in C_{k+1}\cap W_i}P(u,v)\\
 &=& \sum_{u\in C_k\cap W_i}\pi_A(u)\big(1-\sum_{v\in C_{k+1}\setminus W_i}P(u,v)\big)\\
 &=& \sum_{v\in C_k\cap W_i}\pi_A(v)\\
 && -\ \sum_{u\in C_k\cap W_i}\sum_{v\in C_{k+1}\setminus W_i}\pi_A(u)P(u,v)\,.
\end{eqnarray*} 

Let $j\in\{0,1,\ldots, g-1\}$. 
Using the above inequality, it can be proved by induction on $j$ that 
\begin{eqnarray*}
 \sum_{v\in C_j\cap W_i}\pi_A(v) & \geq & \sum_{v\in C_0\cap W_i}\pi_A(v)\\
 && -\ \sum_{k=0}^{j-1}\sum_{u\in C_k\cap W_i}\sum_{v\in C_{k+1}\setminus W_i}\pi_A(u)P(u,v)\,.
\end{eqnarray*}
Note that the value of the nested sum is at most the frequency of performing the edge $(w_i,w_{i+1})$ by the agent~$A$. Hence, this value must be less than $3\varepsilon_m$. It follows that 
\[
 \sum_{v\in C_j\cap W_i}\pi_A(v)>\frac{1}{7 g}-3\varepsilon_m\,.
\]  
Observe that 
\[
 \frac{1}{7 g}-3\varepsilon_m \geq \frac{1}{7 m}-3\varepsilon_m =\frac{1}{7 m}-\frac{1}{7m+49}>0\,.
\]

Let $h$ be the period of the BSCC containing the augmented vertex where the agent~$B$ starts, and let  $C_0', C_1', \ldots, C_{h-1}'$ be the corresponding cyclic classes. Let $C_a$ and $C_b'$ be the cyclic classes in which the agents $A$ and $B$ start. 
Denoting
\begin{eqnarray*}
   \mathcal{X}_{t} & \equiv & \sum_{v\in C_{(a+t)\texttt{ mod }g}\cap W_i} \pi_A(v)\,,\\[2ex]
   \mathcal{Y}_t & \equiv & \sum_{v\in C_{(b+t)\texttt{ mod }h}'\cap W_i} \pi_B(v)\,,
\end{eqnarray*}

we can now deduce 
\begin{eqnarray*}
 \chi(w_i) &=& \frac{1}{gh}\sum_{t=0}^{gh-1}(g \cdot \mathcal{X}_t)(h \cdot \mathcal{Y}_t)\\
 &\geq& \frac{1}{gh}\sum_{t=0}^{gh-1}(g(\frac{1}{7 g}-3\varepsilon_m))(h \cdot \mathcal{Y}_t)\\
 &=&(\frac{1}{7 g}-3\varepsilon_m)\sum_{t=0}^{gh-1}\sum_{v\in C_{(b+t)\texttt{ mod }h}'\cap W_i} \pi_B(v)\\
 &=&(\frac{1}{7 g}-3\varepsilon_m)g\sum_{v\in W_i} \pi_B(v)\\
 &=&(\frac{1}{7 g}-3\varepsilon_m)g\pi_B(w_i)\\
 &>&(\frac{1}{7 g}-3\varepsilon_m)g\frac{1}{7}\\
 &=&\frac{1}{49}-\frac{3}{7}\varepsilon_m g\\
 &\geq& \frac{1}{49}-\frac{3}{7}\varepsilon_m m\\
 &=&\frac{1}{49}-\frac{3}{7}\frac{1}{21m+147} m\\
 &=&\frac{m+7}{49m+343}-\frac{m}{49m+343}\\
 &=& \frac{3}{21m+147}=3\varepsilon_m\,,
\end{eqnarray*}
 which is in contradiction with previously proved $\chi(w_i)<3\varepsilon_m$. This finishes the proof.

A final remark considering this proof is that the used ``gap estimating function'' $f(m)=\frac{1}{21m+147}$ (also denoted by $\varepsilon_m$ in the proof) cannot be further improved by more than a constant multiplicative factor in the asymptotic sense. The reason is that with $m\geq 2$ available memory states, both agents can deterministically repeat the $(m-1)$-th iteration from the above description of the HR profile achieving the frequency vector $\mu=(\frac{2}{3},\frac{2}{3},\frac{2}{3},0)$ 
and this deterministic FR$_m$ profile achieves frequency vector $\mu'$, where $\mu'(w_1)=\mu'(w_2)=\mu'(w_3)=\frac{2}{3}-\frac{2}{9m+3}$ and $\mu'(w_4)=\frac{2}{3m+1}$. The $L_\infty$ norm of $\mu-\mu'$ is equal to $\frac{2}{3m+1}$, implying that $f\in\mathcal{O}(\frac{1}{m})$ for any satisfying ``gap estimating function'' $f$. We leave as an open problem whether this asymptotic upper bound is specific to the studied example (graph $D_3$, vector $\mu$) or whether it is a general phenomenon.

\subsection{A Proof of Theorem 2 (a)}
Let $D = (V,E)$ be a graph, $\Col : V \to \cols$ a coloring, $Obj: \gamma\rightarrow [0,1]$ a frequency vector. We show that the problem whether there exists a HR profile achieving $\mu \geq \Obj$ is \PSPACE-hard.

We prove the result by reduction from the \PSPACE-complete CR-UAV problem for a single UAV \cite{HO:UAV-problem-PSPACE}

An instance of the CR-UAV Problem with a single UAV is a set $V$ of $n\geq 2$ vertices, where each vertex $v\in V$ is assigned a positive integer $RD(v)$ (the ``relative deadline'' of target $v$), and each pair of two distinct vertices $v,v'\in V$ is assigned a positive integer $FT(v,v')$ (the ``flight time'' from $v$ to $v'$). In addition, it is required that $FT$ is symmetric and satisfies the triangle inequality.\footnote{These assumptions are not needed in our proof.} The question is whether there exists an infinite sequence $\zeta = v_0, v_1,\ldots $ (referred to as a \textit{solution} of the problem) such that 
\begin{itemize}
    \item $v_i\neq v_{i+1}$ for all $i\in\Nset$,
    \item every $v \in V$ occurs infinitely often in $\zeta$,
    \item for every finite subsequence $v_i,v_{i+1},\ldots,v_j$ of $\zeta$ that    starts and ends at two consecutive occurrences of $v=v_i=v_j$ we have that $\sum_{t=i}^{j-1}FT(v_t,v_{t+1})\leq RD(v)$. 
\end{itemize} 
Intuitively, the problem is to decide whether a single UAV is able to travel among the vertices of $V$ so that each vertex $v\in V$ is visited infinitely often and the return time is bounded by $RD(v)$ time units. The problem is \PSPACE-complete even if all numerical constants are encoded in unary.

Let $V=\{v_1,\ldots,v_n\}$, $RD$, $FT$ be an instance of the CR-UAV problem with a single UAV with all numerical values of $RD$ and $FT$ represented in unary. In the rest of this proof, we use $N$ to denote the set $\{1,\ldots,n\}$.
We construct a directed graph $D=(U,E)$ where the set of vertices $U$ consists of
\begin{itemize}
    \item all elements of $V$,
    \item vertices of the form $w_{i,j,l}$ where $i,j \in N$, $i \neq j$, $l\in\{1,2,\ldots,FT(v_i,v_j)-1\}$,
    \item vertices of the form $u_{i,j}$ where $i \in N$ and $j\in\{0,\ldots, RD(v_i)-1\}$, 
\end{itemize}
and the set of edges $E$ consists of the following subsets:
\begin{itemize}
    \item $\{(v_i,v_j)\ |\ i,j\in N, FT(v_i,v_j)=1\}$,
    \item $\{(v_i,w_{i,j,1})\ |\ i,j\in N,FT(v_i,v_j)\geq 2\}$,
    \item $\{(w_{i,j,l},w_{i,j,l+1})\ |\ i,j{\in} N,l\in\{1,\ldots, FT(v_i,v_j){-}2\}$,
    \item $\{(w_{i,j,FT(v_i,v_j)-1},v_j)\ |\ i,j\in N,FT(v_i,v_j)\geq 2\}$,
    \item $\{(u_{i,j},u_{i,j-1})\ |\ i\in N, j\in\{1,\ldots, RD(v_i)-1\}\}$,
    \item $\{(u_{i,0},u_{i,RD(v_i)-1})\ |\ i\in N\}$,
    \item $\{(u_{i,j},u_{i,0})\ |\ i\in N, j\in\{0,\ldots, RD(v_i)-1\}$.
\end{itemize}

The set of colors is 
\[
  \cols=\{c_i\ |\ i\in N\}\cup\{c'_i\ |\ i\in N\}\cup\{\hat{c}\}\,.
\]
These colors are assigned to the vertices of $U$ by coloring $\Col: U\rightarrow \gamma$, where 
\begin{itemize}
    \item $\Col(v_i)=c_i$ for all $i\in N$,
    \item $\Col(u_{i,0})=c'_i$ for all $i\in N$, 
    \item $\Col(u_{i,j})=c_i$ for all $i\in N$, $j\in\{1,\ldots,RD(v_i)-1\}$,
    \item $\Col(w_{i,j,l})=\hat{c}$ for all $w_{i,j,l}$.
\end{itemize}
The objective $\Obj: \cols\rightarrow [0,1]$ is defined as follows: 
\begin{itemize}
    \item $\Obj(\hat{c})=0$, 
    \item $\Obj(c_i)=1$, 
    \item $\Obj(c'_i)=\frac{1}{RD(v_i)}$ for all $i\in N$.
\end{itemize}
The number of agents is $k=n+1$.

Intuitively, each pair of distinct vertices $v_i,v_j\in V$ is connected (in both directions) by a directed path of length $FT(v_i,v_j)$ leading through the newly added vertices $w_{i,j,l}$. Each vertex $v_i$ has its associated ``timer gadget'' consisting of a directed cycle 
\[u_{i,0}\mapsto u_{i,RD(v_i)-1}\mapsto u_{i,RD(v_i)-2}\mapsto \ldots \mapsto u_{i,1}\mapsto u_{i,0}
\]
and additional edges leading to $u_{i,0}$ from all vertices of the gadget. The constructed graph has $n+1$ strongly connected components, matching the number of agents $k$ (each gadget forms one component, the remaining part is strongly connected). Each target vertex $v_i$ is assigned the same color as the vertices in its associated timer gadget, except for the vertex $u_{i,0}$, which has its own color. When an agent is put into a timer gadget associated to a target vertex $v_i$, it can at any moment enter the vertex $u_{i,0}$ (meaning to ``stop the timer''). If the agent in the timer gadget is in $u_{i,0}$, it can take a step into $u_{i,RD(v_i)-1}$ (meaning to ``reset the timer''). If the agent in the timer gadget is in $u_{i,j}\neq u_{i,0}$, it can take a step into $u_{i,j-1}$ (meaning to ``continue the countdown''). Observe that the agent in the timer gadget is forced to step into $u_{i,0}$ at least once in every $RD(v_i)$ consecutive steps and whenever the agent is in $u_{i,0}$, it is capable of not returning to $u_{i,0}$ earlier than after $RD(v_i)$ steps. The construction is designed in a way that if there exists a HR profile for $k=n+1$ agents that achieves the objective $\Obj$, then one agent has to be put into each of the timer gadgets and the remaining agent has to be put into the remaining strongly connected component: $n$ agents thus play the role of ``timers'' and the remaining one agent plays the role of a UAV.

We prove that $V$, $RD$, $FT$ is a positive instance of the CR-UAV Problem with a single UAV if and only if there exists a HR profile for $k=n+1$ agents achieving some frequency vector $\mu$ such that $\mu\geq Obj$.

Assume that $V$, $RD$, $FT$ is a positive instance of the CR-UAV Problem with a single UAV. As mentioned in \cite{HO:UAV-problem-PSPACE}, 
if there is a solution to the problem (a sequence $s\in V^{\omega}$ satisfying the requirements), then there exists also a periodic solution in which the target vertices are visited repeatedly in the same order, meaning that there exist $r\in \mathbb{N}_+$ and $\tilde{s}=(\tilde{s}_0, \tilde{s}_1, \ldots, \tilde{s}_{r-1})\in V^{r}$ having the following properties: $\tilde{s}_{r-1}\neq\tilde{s}_0$, for all $i\in\{0,1,\ldots,r-2\}$ it holds that $\tilde{s}_i\neq\tilde{s}_{i+1}$, for the infinite sequence $s'=\tilde{s}^\omega\in V^{\omega}$ it holds that each $v\in V$ has infinitely many occurrences in $s'$ and for any finite substring $(s'_i,s'_{i+1},\ldots,s'_j)$ of $s'$ which starts and ends at two consecutive occurrences of $v=s'_i=s'_j$ we have that $\sum_{t=i}^{j-1}FT(s'_t,s'_{t+1})\leq RD(v)$. Let such $r$, $\tilde{s}$ and the corresponding $s'=\tilde{s}^\omega$ be fixed throughout the rest of this proof.

Let us define the following multi-run $\rho$ for $k=n+1$ agents $A_0,A_1,\ldots,A_n$ and the constructed graph $D=(U,E)$: initially, the agent $A_0$ is placed to $\tilde{s}_0$ and agent $A_i$ is placed to $u_{i,0}$ for each $i\in N$; the agent $A_0$ then visits vertices of $V$ in the same order as in $s'$, using the vertices $w_{i,j,l}$ to travel between them; whenever the agent $A_0$ is to take a step into $v_i$ (for any $v_i\in V$), $A_i$ takes a step into $u_{i,0}$ (so that every time $A_0$ is in $v_i$, $A_i$ is in $u_{i,0}$), otherwise,  $A_i$ uses the only edge that does not lead to $u_{i,0}$, provided that such edge exists (if not, then $A_i$ steps into $u_{i,0}$). 
For this multi-run $\rho$ it holds that after $A_0$ has visited each of the vertices in $V$ at least once, then for all $i\in N$ and all $j\in\{0,1,\ldots,RD(v_i)-1\}$ we have that $A_i$ is in $u_{i,j}$ if and only if the last visit of vertex $v_i$ by $A_0$ has occurred exactly before $(-j)\texttt{ mod }RD(v_i)$ steps (if $A_0$ is present in $v_i$, it counts as the last visit, occurring exactly before $0$ steps).

Let multi-run $\rho'=(\rho'_0, \rho'_1, \ldots, \rho'_n)$ be a suffix of the multi-run $\rho$ that is obtained from $\rho$ by dropping the first $m=FT(\tilde{s}_{r-1}, \tilde{s}_0)+\sum_{t=0}^{r-2}FT(\tilde{s}_t, \tilde{s}_{t+1})$ elements (from each entry of $\rho$): the initial configuration of agents in $\rho'$ corresponds to the configuration that is reached in $\rho$ after $A_0$ has visited all vertices in $\tilde{s}$ and returned to $\tilde{s}_0$. Using the previous observations, we obtain that the configurations of agents in the multi-run $\rho'$ repeat with a constant period $m$. Let $\pi$ be the HR profile fixing $\rho'$, meaning that the strategy of each of the $n+1$ agents $A_i$ deterministically emulates $\rho'_i$. We claim that $\pi$ achieves some frequency vector $\mu\geq Obj$. The fact that there indeed exists a frequency vector achieved by $\pi$ follows from the fact that the configurations of agents in the multi-run $\rho'$ (the only possible multi-run corresponding to the strategy profile $\pi$) periodically repeat with a finite period of $m$ steps: the long-run average frequency of visits to vertices of every color $c$ is thus equal to the average frequency of such visits within the first $m$ steps (which is always a rational number).

Let $\mu$ be the frequency vector achieved by $\pi$. It holds trivially that $\mu(\hat{c})\geq Obj(\hat{c})=0$. The multi-run $\rho'$ is defined in a way that for all $i\in N$ there is always an agent in some vertex of the color $c_i$: when agent $A_0$ (playing the role of a UAV) is in $v_i$ (of color $c_i$), $A_i$ is in $u_{i,0}$ and since $A_0$ always returns to $v_i$ after at most $RD(v_i)$ steps, it means that agent $A_i$ then stays in vertices of color $c_i$ until $A_0$ returns to $v_i$ (as already mentioned, $A_i$ is capable of not returning to $u_{i,0}$ earlier than after $RD(v_i)$ steps and by definition of $\rho'$, if $A_0$ does not step into $v_i$, $A_i$ does not enter $u_{i,0}$ unless it is forced to do so). Therefore $\mu(c_i)\geq Obj(c_i)=1$ for all $i\in N$. It follows from the construction that vertices $u_{i,0}$ are visited with frequency at least $\frac{1}{RD(v_i)}$, so $\mu(c'_i)\geq Obj(c'_i)=\frac{1}{RD(v_i)}$ for all $i\in N$. We have thus shown that $\mu\geq Obj$, finishing the proof of the implication. Notice that each strategy in the profile $\pi$ tells the agent to deterministically repeat a sequence of $m$ steps (ad infinitum), which is possible to implement with only $m$ memory states. It follows that there is actually a FR$_m$ profile satisfying the objective.

In order to prove the converse, assume there exists a HR profile for $k=n+1$ agents (and the constructed graph $D=(U,E)$) achieving some frequency vector $\mu$ such that $\mu\geq Obj$. We are to prove that $V$, $RD$, $FT$ is a positive instance of the CR-UAV Problem with a single UAV. Let $\tau$ be a multi-run of $D$ with the corresponding frequency vector $\mu\geq Obj$: since there exists a HR profile satisfying the objective with probability $1$ (according to the definition), it follows that such $\tau$ exists. Observe that in the construction of $D$, $Col$ and $Obj$, only vertex $u_{i,0}$ is assigned color $c'_i$, thus at least one agent has to start in the strongly connected component containing $u_{i,0}$ (for all $i\in N$) in order to satisfy the nonzero objective for vertex $u_{i,0}$. It follows that the remaining agent has to start in the remaining strongly connected component of $D$, otherwise, there is a color $c_i$ for which the objective is not satisfied: remember that $n\geq 2$ and that none single agent 
is able to visit vertices of color $c_i$ with long-run average frequency equal to $1$ (for all $i\in N$).

Observe that the number of all possible configurations of $n+1$ agents in a graph with $|U|$ vertices is bounded from above by $|U|^{n+1}$. In the multi-run $\tau$, vertices of each of the colors $c_1, c_2, \ldots, c_n$ are being visited by at least one agent (being ``occupied'') with long-run average frequency equal to $1$ (almost always), which implies that the long-run average frequency of simultaneous occupation of all colors $c_1, c_2, \ldots, c_n$ (that is, when for each $i\in N$ there is at least one agent visiting a vertex of color $c_i$) is also equal to $1$. 
Thus, in the course of the multi-run $\tau$ there eventually occur $|U|^{n+1}+1$ consecutive steps during which all of the colors $c_1, c_2, \ldots, c_n$ remain occupied. 

Therefore, there is a configuration of agents from which the same configuration of agents can be reached in a positive number of (at most $|U|^{n+1}$) steps so that vertices of any of the colors $c_1, c_2, \ldots, c_n$ remain occupied. Consider the order in which the vertices of $V$ are visited along such a way from such a configuration to itself (notice that all vertices of $V$ indeed have to be visited and that all these visits are performed by the same agent): it follows from the construction that repeating this sequence of vertices of $V$ ad infinitum provides a (periodic) solution to the input instance $V$, $RD$, $FT$ of the CR-UAV Problem with a single UAV.
\subsection{A Proof of Theorem 2 (b)}

We prove these hardness results by reduction from SAT (the Boolean satisfiability problem), which is NP-complete. 
Let $\psi$ be a propositional formula in conjunctive normal form. Without restrictions, we assume that $\psi$ contains at least two clauses, at least two distinct propositional variables, and no tautological clauses (that is, clauses containing both $p$ and $\neg p$ for some propositional variable $p$). Furthermore, we assume that if a propositional variable $q$ occurs in $\psi$, then both of the literals $q$ and $\neg q$ occur in $\psi$.  


Let $\psi_1,\psi_2,\ldots,\psi_r$ be the clauses of $\psi$, and let  $q_1, q_2, \ldots, q_n$ 
be the propositional variables occurring in $\psi$.
For the rest of this proof, we fix the following sets:
$U'=\{u_1,\ldots,u_n\}$, $V'=\{v_1,\ldots,v_n\}$, $W'=\{w_1,\ldots,w_n\}$.

Let $D=(V,E)$ be a graph where
\[
  V=\{u,v,w\}\cup U'\cup V'\cup W'\cup\{x_1,x_2,\ldots,x_r\}
\]  
and 
\begin{eqnarray*}
  E & = &  \{(u,v),(v,v),(w,w)\}\\
    & \cup & \{(v,u_1),(v,u_2),\ldots,(v,u_n)\}\\
    & \cup & \{(u_1,v_1),\ldots,(u_n,v_n)\}\\
    & \cup & \{(u_1,w_1),\ldots,(u_n,w_n)\}\\
    & \cup & \{(w,v_1),\ldots,(w,v_n)\}\\
    & \cup & \{(w,w_1),\ldots,(w,w_n)\}\\
    & \cup & \{(v_1,w),\ldots,(v_n,w)\}\\
    & \cup & \{(w_1,w),\ldots,(w_n,w)\}\\
    & \cup & \{(x_1,u),\ldots,(x_r,u)\}\\
    & \cup & \{(w_i,x_j) | \neg q_i \in \psi_j\}\\
    & \cup & \{(v_i,x_j) | q_i \in \psi_j\}\,.
\end{eqnarray*}


Let $\Col$ be the trivial coloring (that is, $\Col(z)=z$ for all $z\in V$). Let \[\zeta=\frac{1}{1024 n^4 r^2}\] throughout the rest of the proof. The frequency vector $\Obj$ is defined as follows: 
\begin{itemize}
    \item $\Obj(u)=\frac{1-\zeta}{8}$,
    \item $\Obj(v)=\frac{1-\zeta}{2}$, 
    \item $\Obj(w)=\frac{7(1-\zeta)}{8}$, 
    \item $\Obj(u_i)=\Obj(v_i)=\Obj(w_i)=\frac{1-\zeta}{8n}$ for all $i{\in}\{1,\ldots,n\}$,
    \item  $\Obj(x_j)=\frac{1-\zeta}{8nr}$ for all $j\in\{1,\ldots,r\}$. 
\end{itemize}
This completes the description of the reduction. 
We prove that the formula $\psi$ is satisfiable if and only if there exists a MR strategy profile (or FMR strategy profile) $\pi=(\xi_A,\xi_B)$ for $k=2$ agents achieving a frequency vector $\mu\geq \Obj$.

Assume that $\psi$ is satisfiable. We prove that there exists a MR profile (or FMR profile) $\pi=(\xi_A,\xi_B)$ for $k=2$ agents achieving a frequency vector $\mu\geq Obj$. We start by describing the MR profile, and then show how to modify this profile into a FMR profile. 
Let $\vartheta$ be a valuation 
such that $\vartheta(\psi)=1$. Let $g$ be a function assigning to each literal the number of clauses of $\psi$ containing the literal. In the constructed MR strategy profile $\pi=(\xi_A,\xi_B)=((v_A,\kappa_A),(v_B,\kappa_B))$, we put $v_A = v$ and $v_B = w$. 
The functions $\kappa_A$ and $\kappa_B$ are defined as follows. For all 
$i\in\{1,\ldots,n\}$ and $j\in\{1,\ldots,r\}$, we put
\begin{itemize}
    \item $\kappa_A(x_j)(u)=\kappa_B(x_j)(u)=1$, 
    \item $\kappa_A(u)(v)=\kappa_B(u)(v)=1$, 
    \item $\kappa_A(v)(v)=\frac{3}{4}$, 
    \item $\kappa_B(v)(v)=0$, 
    \item $\kappa_A(v)(u_i)=\frac{1}{4n}$, 
    \item $\kappa_B(v)(u_i)=\frac{1}{n}$, 
    \item $\kappa_A(u_i)(v_i)=\vartheta(q_i)$, 
    \item $\kappa_A(u_i)(w_i)=\vartheta(\neg q_i)$, 
    \item $\kappa_B(u_i)(v_i)=\kappa_B(u_i)(w_i)=\frac{1}{2}$, 
    \item $\kappa_A(v_i)(w)=\kappa_A(w_i)(w)=0$, 
    \item $\kappa_B(v_i)(w)=\kappa_B(w_i)(w)=1$, 
    \item $\kappa_A(w)(v_i)=\kappa_A(w)(w_i)=\frac{1}{2n}$, 
    \item $\kappa_B(w)(v_i)=\frac{1}{7n}\vartheta(\neg q_i)$, 
    \item $\kappa_B(w)(w_i)=\frac{1}{7n}\vartheta(q_i)$, 
    \item $\kappa_A(w)(w)=0$, 
    \item $\kappa_B(w)(w)=\frac{6}{7}$, 
    \item $\kappa_A(v_i)(x_j)=\frac{1}{g(q_i)}$ (unless $(v_i, x_j)\notin E$),
    \item $\kappa_A(w_i)(x_j)=\frac{1}{g(\neg q_i)}$ (unless $(w_i, x_j)\notin E$),
    \item $\kappa_B(v_i)(x_j)=\kappa_B(w_i)(x_j)=0$.
\end{itemize}

The Markov chains induced by $\xi_A$ and $\xi_B$ contain a single BSCC, and this BSCC is aperiodic (because of the presence of a self-loop). Furthermore, these two BSCCs are disjoint (the agents can never meet in the same vertex). Let $\alpha$, $\beta$ be the unique invariant distributions corresponding to the two induced Markov chains. It can be easily shown that $\beta(w)=\frac{7}{8}$ and that for each $i\in\{1,2,\ldots,n\}$ we have that 
\begin{itemize}
    \item if $\vartheta(q_i)=0$, then $\beta(v_i)=\frac{1}{8n}$,
    \item if $\vartheta(q_i)=1$, then $\beta(w_i)=\frac{1}{8n}$.
\end{itemize}
Similarly, it can be shown that for all $i\in\{1,2,\ldots,n\}$ and $j\in\{1,2,\ldots,r\}$ we have that 
\[
 \alpha(u)=\frac{1}{8},\quad \alpha(v)=\frac{1}{2},\quad \alpha(u_i)=\frac{1}{8n} 
\]
and 
\begin{itemize}
    \item if $\vartheta(q_i)=1$, then $\alpha(v_i)=\frac{1}{8n}$,
    \item if $\vartheta(q_i)=0$, then $\alpha(w_i)=\frac{1}{8n}$.
\end{itemize}
Finally, we have that $\alpha(x_j)\geq\frac{1}{8nr}$, which follows from the fact that there is at least one positively evaluated literal in $\psi_j$ and thus a vertex $z\in V'\cup W'$ such that $\alpha(z)=\frac{1}{8n}$ and $\kappa_A(z)(x_j)=\frac{1}{g(t)}\geq \frac{1}{r}$, where $t$ stands for a literal contained in $\psi$ (note that $\sum_{l=1}^r\alpha(x_l)=\alpha(u)=\frac{1}{8}$). It follows that the MR strategy profile $\pi$ achieves a frequency vector $\mu$ such that 
\[
 \mu(z) \ \geq \ \frac{1}{1-\zeta}\Obj(z)\ >\ \Obj(z)
\]  
for all $z\in V$, hence $\mu\geq Obj$. This finishes the proof of the `$\Rightarrow$' direction for MR profiles.

In the next paragraphs, we show how to modify the MR profile $\pi$ into a  FMR profile $\pi'$ achieving a frequency vector $\mu'\geq \Obj$. Let $\xi=(u,\kappa')$ be a FMR strategy where $\kappa'(x)(y)=\frac{1}{\texttt{deg}^+(x)}$ for all $x,y\in V$ such that $(x,y)\in E$, where $\texttt{deg}^+(x)$ stands for the outdegree of the vertex $x$ in $D$ ($\kappa'(x)(y)=0$ whenever $(x,y)\notin E$). Let $\lambda$ be the corresponding unique invariant distribution of the Markov chain induced by $\xi$. It is easy to observe that $\lambda(x)>0$ for all $x\in V$. Consider MR strategies $\xi_A'=(v,\kappa_A')$, $\xi_B'=(w,\kappa_B')$ where
\[
  \kappa_A'(x)(y)=\frac{(1-\zeta)\alpha(x)\kappa_A(x)(y)+\zeta\lambda(x)\kappa'(x)(y)}{(1-\zeta)\alpha(x)+\zeta\lambda(x)}
\]
and
\[
\kappa_B'(x)(y)=\frac{(1-\zeta)\beta(x)\kappa_B(x)(y)+\zeta\lambda(x)\kappa'(x)(y)}{(1-\zeta)\beta(x)+\zeta\lambda(x)}
\]
for all $x,y\in V$. We show that the strategy profile $\pi'=(\xi_A',\xi_B')$ is full and achieves a vector $\mu'\geq \Obj$. Let $(x,y)\in E$ be an arbitrary edge of $D$. It follows that 
\begin{eqnarray*}
 \kappa_A'(x)(y) &=& \frac{(1-\zeta)\alpha(x)\kappa_A(x)(y)+\zeta\lambda(x)\kappa'(x)(y)}{(1-\zeta)\alpha(x)+\zeta\lambda(x)}\\
  & \geq & \frac{\zeta\lambda(x)\kappa'(x)(y)}{1+\zeta\lambda(x)}\\
  & > & 0
\end{eqnarray*}
and 
\begin{eqnarray*}
\kappa_B'(x)(y) & = & \frac{(1-\zeta)\beta(x)\kappa_B(x)(y)+\zeta\lambda(x)\kappa'(x)(y)}{(1-\zeta)\beta(x)+\zeta\lambda(x)}\\
 &\geq&\frac{\zeta\lambda(x)\kappa'(x)(y)}{1+\zeta\lambda(x)}\\
 &>&0\,.
\end{eqnarray*} 
Hence, $\pi'$ is indeed a full MR strategy profile.

Let 
\begin{eqnarray*}
    \alpha'(x) & = & (1-\zeta)\alpha(x)+\zeta\lambda(x)\,,\\
    \beta'(x)  & = & (1-\zeta)\beta(x)+\zeta\lambda(x)
\end{eqnarray*}
for all $x\in V$. It is easy to see that $\alpha',\beta'$ are distributions over~$V$. In the rest of this paragraph, we prove that they are the unique invariant distributions for the Markov chains induced by strategies $\xi_A'$ and $\xi_B'$. Let $y\in V$ be an arbitrary vertex. We have that 
\begin{eqnarray*}
  & & \sum_{x\in V}\alpha'(x)\kappa_A'(x)(y)\\
  &=& \sum_{x\in V}((1-\zeta)\alpha(x)\kappa_A(x)(y)+\zeta\lambda(x)\kappa'(x)(y))\\
  &=& (1-\zeta)\sum_{x\in V}\alpha(x)\kappa_A(x)(y)+\zeta\sum_{x\in V}\lambda(x)\kappa'(x)(y)\\
  &=& (1-\zeta)\alpha(y)+\zeta\lambda(y)=\alpha'(y)\,,
\end{eqnarray*}
implying that $\alpha'$ is an invariant distribution of the Markov chain induced by the strategy $\xi_A'$. In a similar way, it can be shown that $\beta'$ is an invariant distribution of the Markov chain induced by $\xi_B'$. Since both of the Markov chains are irreducible, it follows that their invariant distributions $\alpha'$, $\beta'$ are unique.

Let $\mu'$ be the frequency vector achieved by the strategy profile $\pi'$, 
let $z\in V$. We have that 
\begin{eqnarray*}
  \mu'(z) & = & \alpha'(z)+\beta'(z)-\alpha'(z)\beta'(z)\\
  &=& \alpha'(z)+\beta'(z)(1-\alpha'(z))\\
  &\geq &\alpha'(z)
\end{eqnarray*}  
and 
\begin{eqnarray*}
 \mu'(z) & = & \alpha'(z)+\beta'(z)-\alpha'(z)\beta'(z)\\
 & =& \alpha'(z)(1-\beta'(z))+\beta'(z)\\
 &\geq& \beta'(z)\,.
\end{eqnarray*}
Furthermore, 
\begin{eqnarray*}
 \alpha'(z) & = & (1-\zeta)\alpha(z)+\zeta\lambda(z)\\
  &>& (1-\zeta)\alpha(z)\,,\\[1ex] 
 \beta'(z) & = & (1-\zeta)\beta(z)+\zeta\lambda(z)\\
  &>& (1-\zeta)\beta(z)\,.
\end{eqnarray*}  
In the previous paragraphs, we have actually shown that necessarily $\alpha(z)\geq \frac{1}{1-\zeta}Obj(z)$ or $\beta(z)\geq \frac{1}{1-\zeta}Obj(z)$, which implies that 
\begin{eqnarray*}
 \alpha'(z) & > & (1-\zeta)\alpha(z)\\
  &\geq& (1-\zeta)\frac{1}{1-\zeta}\Obj(z)\\
  & = & \Obj(z)
\end{eqnarray*}  
or 
\begin{eqnarray*}
 \beta'(z) & > & (1-\zeta)\beta(z)\\
  &\geq& (1-\zeta)\frac{1}{1-\zeta}\Obj(z)\\
  &=& \Obj(z)\,.
\end{eqnarray*}  
In either case, we obtain $\mu'(z)\geq \Obj(z)$, hence $\mu'\geq \Obj$.

To prove the `$\Leftarrow$' direction, let $\pi=(\xi_A,\xi_B)$ be a MR profile for two agents $A$ and $B$ achieving a frequency vector $\mu\geq Obj$. 
Consider the two Markov chains induced by the strategies $\xi_A$ and $\xi_B$. By definition, the relative frequencies have to approach $\mu$ in the limit with probability~$1$. Hence, we assume without restrictions that both $A$ and $B$ start in a BSCC and that there is only a single BSCC in either of the corresponding two Markov chains (this assumption is legitimate since $D$ is strongly connected). Let $\alpha$, $\beta$ be the unique invariant distributions corresponding to these two induced Markov chains ($\alpha$ belongs to $A$ and $\beta$ belongs to $B$). Since the achieved frequency vector is independent of the order of strategies in the profile, we assume wlog that $\alpha(v)\geq \beta(v)$, i.e., the agent $A$ visits $v$ at least as often as the agent $B$.

It is easy to see that at least one of the two agents has to use the self-loop on vertex $v$ with positive frequency (and thus positive probability in its MR strategy). Otherwise, none of the two agents is able to visit $v$ more often than in every fifth step, and the total frequency of visits to $v$ then cannot exceed 
\[
\frac{2}{5}<\frac{1-1/1024}{2}<\frac{1-1/(1024 n^4 r^2)}{2}=\frac{1-\zeta}{2}=\Obj(v),
\]
contradicting the assumption that $\mu\geq Obj$. Consequently, at least one of the two agents uses a strategy that induces an aperiodic Markov chain. The frequency of visits to each vertex $s\in V$ may thus be expressed as 
\begin{eqnarray*}
 \mu(s) & = & 1-(1-\alpha(s))(1-\beta(s))\\
 &=& \alpha(s)+\beta(s)-\alpha(s)\beta(s)\,.
\end{eqnarray*}
Observe that the vertex $u$ may be visited only when some of the vertices $x_j$ has been visited in the preceding step (with one possible exception at the very beginning of the run). It follows that 
\[
  \sum_{j=1}^{m}\mu(x_j) \ \geq \ \mu(u) \ \geq\  \Obj(u) \ =\ \frac{1-\zeta}{8}\,.
\] 
For the sake of contradiction, assume 
there is $s'\in V$ such that $\alpha(s')\beta(s')>2\zeta$. It follows that 
{\small
\begin{eqnarray*}
  2-2\zeta & = & \frac{1-\zeta}{8} + \frac{1-\zeta}{2} + \frac{7(1-\zeta)}{8} + n\frac{1-\zeta}{8n} + n\frac{1-\zeta}{8n}\\
  && + \ n\frac{1-\zeta}{8n} + \frac{1-\zeta}{8}\\
  & \leq &  \mu(u)+\mu(v)+\mu(w)+\sum_{i=1}^{n}\mu(u_i)\\
  && + \ \sum_{i=1}^{n}\mu(v_i)+\sum_{i=1}^{n}\mu(w_i)+\sum_{j=1}^{m}\mu(x_j)\\
  & = & \sum_{s\in V}\mu(s)\\
  & = & \sum_{s\in V}(\alpha(s)+\beta(s)-\alpha(s)\beta(s))\\
  & = & 1+1-\sum_{s\in V}\alpha(s)\beta(s)\leq 2-\alpha(s')\beta(s')\\
  & < & 2-2\zeta\,,
\end{eqnarray*}}%
which is a contradiction. Hence, $\alpha(s')\beta(s')\leq 2\zeta$ for all \mbox{$s'\in V$}. By applying this observation to $v$, we get 
\[
  \beta(v)^2 \ \leq\ \alpha(v)\beta(v) \ \leq \ 2\zeta
\]
and hence $\beta(v)\leq \sqrt{2\zeta}$. 
Since there is only one edge leading from vertex $u$ and this edge leads to $v$, 
we get $\beta(u)\leq \beta(v)$ and thus also $\beta(u)\leq \sqrt{2\zeta}$.

In this paragraph, we prove that for all $s\in V'\cup W'$, one of the inequalities
\begin{eqnarray*}
 \alpha(s) &<& 2\sqrt{\zeta}\,,\\
 \alpha(s) &>& \frac{1-\zeta}{8n}-2\sqrt{\zeta}
\end{eqnarray*}
holds. Note that these two inequalities cannot hold simultaneously as that would imply 
\begin{eqnarray*}
 \frac{1-\zeta}{8n}-2\sqrt{\zeta} &< &2\sqrt{\zeta},\\ 
 \frac{1-\zeta}{8n} &<& 4\sqrt{\zeta},\\
  1-\zeta&<&32n\sqrt{\zeta}
\end{eqnarray*}  
and therefore 
\begin{eqnarray*}
 \frac{1}{2} & = & 1-\frac{1}{2}\\
  &<& 1-\frac{1}{1024n^4r^2}\\
  &=& 1-\zeta<32n\sqrt{\zeta}\\
  &=& 32n\sqrt{\frac{1}{1024n^4r^2}}\\
  &=& \frac{1}{nr}\\
  &\leq&  \frac{1}{2}.
\end{eqnarray*} 
Let $s\in V'\cup W'$. For the sake of contradiction, assume that 
\[
  2\sqrt{\zeta} \ \leq\ \alpha(s)\ \leq \ \frac{1-\zeta}{8n}-2\sqrt{\zeta}.
\]   
Since $\alpha(s)\beta(s)\leq 2\zeta$, we get $\beta(s)\leq \frac{2\zeta}{\alpha(s)}$ (notice that $\alpha(s)\neq 0$ because of $0<2\sqrt{\zeta}\leq \alpha(s)$). Recall that 
\[
\mu(s)\ =\ \alpha(s)+\beta(s)-\alpha(s)\beta(s) \ \geq \ \frac{1-\zeta}{8n}\ =\ \Obj(s).
\] 
From this, a contradiction follows easily:
\begin{eqnarray*}
  \frac{1-\zeta}{8n} & \leq &  \alpha(s)+\beta(s)-\alpha(s)\beta(s)\\
  &=& \alpha(s)+(1-\alpha(s))\beta(s)\\
  &\leq&  \alpha(s)+(1-\alpha(s))\frac{2\zeta}{\alpha(s)}\\
  &\leq& \bigg(\frac{1-\zeta}{8n}-2\sqrt{\zeta}\bigg)+\bigg(\frac{2\zeta}{\alpha(s)}-2\zeta\bigg)\\
  &\leq& \frac{1-\zeta}{8n}-2\sqrt{\zeta}+\frac{2\zeta}{2\sqrt{\zeta}}-2\zeta\\
  &=&\frac{1-\zeta}{8n}-\sqrt{\zeta}-2\zeta<\frac{1-\zeta}{8n}\,.
\end{eqnarray*}

Similarly, one can prove that for all $s\in V'\cup W'$, precisely one of the inequalities 
\begin{eqnarray*}
 \beta(s) &<& 2\sqrt{\zeta}\,,\\
\beta(s) &>& \frac{1-\zeta}{8n}-2\sqrt{\zeta}
\end{eqnarray*}
holds.

Let $i\in\{1,2,\ldots,n\}$. Observe that $\beta(u_i)\leq\beta(u)$ (the long-run frequency of agent $B$ visiting $u_i$ cannot be greater than its frequency of visiting $u$ since $B$ has to visit $u$ between any two consecutive visits to $u_i$), hence $\beta(u_i)\leq \sqrt{2\zeta}$. Since 
\[
 \frac{1-\zeta}{8n}=\Obj(u_i)\leq\mu(u_i)\leq\alpha(u_i)+\beta(u_i)\leq \alpha(u_i)+\sqrt{2\zeta},
\] 
we get $\alpha(u_i)\geq \frac{1-\zeta}{8n}-\sqrt{2\zeta}$. Since 
\[
 \alpha(v_i)+\alpha(w_i)\geq\alpha(u_i)\geq \frac{1-\zeta}{8n}-\sqrt{2\zeta},
\] 
it follows that at least one 
of the following two inequalities must hold: 
\begin{eqnarray*}
  \alpha(v_i) & \geq & \frac{1}{2}\left(\frac{1-\zeta}{8n}-\sqrt{2\zeta}\right)\,,\\ 
  \alpha(w_i) & \geq & \frac{1}{2}\left(\frac{1-\zeta}{8n}-\sqrt{2\zeta}\right)\,,
\end{eqnarray*}  
where 
$\frac{1}{2}(\frac{1-\zeta}{8n}-\sqrt{2\zeta})\geq 2\sqrt{\zeta}$, which can be proved as follows: 
\begin{eqnarray*}
 1-\zeta &=& 1-\frac{1}{1024 n^4 r^2}\\
  &\geq& 1-\frac{1}{1024} \geq \frac{3}{4} \geq \frac{3}{2nr}\\
  &=& 48n\frac{1}{32 n^2r}=48n\sqrt{\zeta}
\end{eqnarray*}  
and hence  
\begin{eqnarray*}
 \frac{1}{2}\left(\frac{1-\zeta}{8n}-\sqrt{2\zeta}\right) &\geq& \frac{1}{2}\left(\frac{48n\sqrt{\zeta}}{8n}-\sqrt{2}\sqrt{\zeta}\right)\\
 & \geq& 2\sqrt{\zeta}.
\end{eqnarray*} 
By combining this with the previous statements, we get that at least one of the two inequalities $\alpha(v_i)>\frac{1-\zeta}{8n}-2\sqrt{\zeta}$, $\alpha(w_i)>\frac{1-\zeta}{8n}-2\sqrt{\zeta}$ must hold.

Since these inequalities hold for all $i\in\{1,2,\ldots,n\}$, there are at least $n$ vertices $s\in V'\cup W'$ such that 
\[
  \alpha(s)>\frac{1-\zeta}{8n}-2\sqrt{\zeta}\,.
\]   
For the sake of contradiction, assume that there are at least $n+1$ such vertices. We have shown that $\beta(u)\leq \sqrt{2\zeta}$. Since $\alpha(u)+\beta(u)\geq \mu(u)\geq\frac{1-\zeta}{8}$, we obtain
\[
  \alpha(u)\ \geq\ \frac{1-\zeta}{8}-\beta(u)\ \geq\ \frac{1-\zeta}{8}-\sqrt{2\zeta}\,.
\]  
Similarly, $\beta(v)\leq \sqrt{2\zeta}$ and $\alpha(v)+\beta(v)\geq \mu(v)\geq\frac{1-\zeta}{2}$, hence 
\[
  \alpha(v)\ \geq\ \frac{1-\zeta}{2}-\beta(v)\ \geq\ \frac{1-\zeta}{2}-\sqrt{2\zeta}\,.
\]  
We obtain 
{\small\begin{eqnarray*}
 1&=&\sum_{s\in V}\alpha(s)\\
 &\geq& \alpha(u)+\alpha(v)+(n+1)\left(\frac{1-\zeta}{8n}-2\sqrt{\zeta}\right)\\
 && +\ \sum_{j=1}^r\alpha(x_j)+\sum_{i=1}^n\alpha(u_i)\\
  &=&  \alpha(u)+\alpha(v)+(n+1)\left(\frac{1-\zeta}{8n}-2\sqrt{\zeta}\right)+\alpha(u)+\alpha(u)\\
  &=& 3\alpha(u)+\alpha(v)+(n+1)\left(\frac{1-\zeta}{8n}-2\sqrt{\zeta}\right)\\
  &\geq& 3\left(\frac{1-\zeta}{8}-\sqrt{2\zeta}\right)+\left(\frac{1-\zeta}{2}-\sqrt{2\zeta}\right)\\
  && +\ (n+1)\left(\frac{1-\zeta}{8n}-2\sqrt{\zeta}\right)\\
  &=& (1+\frac{1}{8n})(1-\zeta)-(2n+2+4\sqrt{2})\sqrt{\zeta}\\
  &=& 1+\frac{1}{8n}-\frac{1}{1024 n^4 r^2}\\
  &&-\ \frac{1}{8192 n^5r^2}-(2n+2+4\sqrt{2})\frac{1}{32 n^2 r}\\
  &>& 1+\frac{1}{8n}-\frac{1}{1024 n}-\frac{1}{8192 n}-\frac{1}{32 n}-\frac{1}{16 n}\\
  &>&1\,,
\end{eqnarray*}}%
which is a contradiction. Hence, there are exactly $n$ vertices $s\in V'\cup W'$ such that 
\[
\alpha(s)>\frac{1-\zeta}{8n}-2\sqrt{\zeta}\,.
\]
It follows that for the remaining $n$ vertices $s\in V'\cup W'$ we have that $\alpha(s)<2\sqrt{\zeta}$. Combining this with the previous statements, 
we get that for all $i\in\{1,2,\ldots,n\}$, it holds either that $\alpha(v_i)> \frac{1-\zeta}{8n}-2\sqrt{\zeta}$ and $\alpha(w_i)<2\sqrt{\zeta}$ or that $\alpha(w_i)> \frac{1-\zeta}{8n}-2\sqrt{\zeta}$ and $\alpha(v_i)<2\sqrt{\zeta}$.

Consider a valuation $\eta$ such that $\eta(q_i)=0$ (meaning that $q_i$ evaluates to false) if and only if $\alpha(v_i)<2\sqrt{\zeta}$ (for all $i\in\{1,2,\ldots,n\}$). We are to prove that $\eta(\psi)=1$ (meaning that $\psi$ is satisfied under the valuation $\eta$). 
Observe that for any $i\in \{1,2,\ldots,n\}$ such that $\eta(\neg q_i)=0$, we have that $\alpha(v_i)\geq 2\sqrt{\zeta}$ and thus (according to the previously proved statements) also $\alpha(v_i)> \frac{1-\zeta}{8n}-2\sqrt{\zeta}$ and $\alpha(w_i)<2\sqrt{\zeta}$. For the sake of contradiction, assume $\eta(\psi)=0$. Let $\psi_j$ be a clause of $\psi$ such that all of the literals in $\psi_j$ evaluate to $0$. Let $(z,x_j)\in E$ be an arbitrary edge leading to $x_j$. It follows from the construction of the graph $D$ that necessarily $z=v_l$ or $z=w_l$ for some $l\in\{1,2,\ldots,n\}$. In the case when $z=v_l$, it follows that $\psi_j$ contains the positive literal $q_l$, therefore $\eta(q_l)=0$ and $\alpha(z)=\alpha(v_l)<2\sqrt{\zeta}$. In the case when $z=w_l$, it follows that $\psi_j$ contains the negative literal $q_l$, therefore $\eta(\neg q_l)=0$ and $\alpha(z)=\alpha(w_l)<2\sqrt{\zeta}$. We thus obtain $\alpha(z)<2\sqrt{\zeta}$ in both cases. Note that the clause $\psi_j$ contains at most $n$ literals because there are no tautological clauses in $\psi$. It follows that there are at most $n$ edges leading to $x_j$ (where for each such edge $(z,x_j)$ we have that $\alpha(z)<2\sqrt{\zeta}$) and thus $\alpha(x_j)\leq 2n\sqrt{\zeta}$. It also holds that 
\[
 \alpha(x_j)+\beta(x_j)\geq \mu(x_j)\geq Obj(x_j)=\frac{1-\zeta}{8nr}
\] 
and $\beta(x_j)\leq \beta(u)\leq \sqrt{2\zeta}$ (where $\beta(x_j)\leq \beta(u)$ is a trivial observation and we have already shown that $\beta(u)\leq \sqrt{2\zeta}$), allowing us to derive 
\[
 \frac{1-\zeta}{8nr}-\sqrt{2\zeta}\leq (\alpha(x_j)+\beta(x_j))-\beta(x_j)=\alpha(x_j)\leq 2n\sqrt{\zeta}\,,
\] implying that 
\[
 1-\zeta-8nr\sqrt{2\zeta}\leq 16n^2r\sqrt{\zeta}
\] 
and thus finally 
\[  
\frac{1}{2}<1-\frac{1}{1024}-\frac{\sqrt{2}}{4}\leq 1-\zeta-8nr\sqrt{2\zeta}\leq 16n^2r\sqrt{\zeta}=\frac{1}{2}\,,
\]
which is a contradiction. We have thus shown that $\eta(\psi)=1$, meaning that $\psi$ is satisfiable. This finishes the proof of the `$\Leftarrow$' direction in the case of MR strategy profiles. The result for FMR strategy profiles is a trivial consequence, since every FMR profile is a MR profile.

The presented polynomial-time reduction may be generalized so that it proves \NP-hardness also for FR$_m$ profiles (for any fixed $m\in\mathbb{N}_+$). The extension is somewhat technical, but it does not require any substantially new ideas.

\subsection{A Proof of Theorem 3}

We start by introducing an auxiliary decision problem \mbox{\textit{mod-SAT}} and proving its \NP-completeness. 

An instance of mod-SAT is a list of ordered pairs $(n_1,S_1), (n_2,S_2), \ldots\, (n_d,S_d)$, where each $n_i$ is a positive integer and each $S_i$ is a subset of $\{0,1,\ldots,n_i{-}1\}$. The sets are represented as lists of their elements, all numeric values are encoded in unary (or alternatively binary), and the number of input ordered pairs (denoted by $d$) is a nonnegative integer. The question is whether there exists an integer $x$ such that $(x\texttt{ mod }n_i)\in S_i$ for each $i\in\{1,2,\ldots,d\}$. 

Clearly, mod-SAT belongs to \NP, because the size of the witnessing integer $x$ can be bounded by $\prod_{i=1}^{d}n_i$, and hence the length of the binary encoding of $x$ is polynomial in the size of the input instance.


We prove \NP-hardness of mod-SAT by a polynomial-time reduction from 3-SAT. 
An instance of 3-SAT is a formula in conjunctive normal form where each clause contains precisely three literals, and the question is whether the formula is satisfiable.  
%
%
A pair of literals is \textit{conflicting} if one of them is a propositional variable and the other is a negation of the same variable. 

Consider an instance $\psi_2 \land \cdots \land \psi_{n+1}$ of 3-SAT, where each $\psi_i$ is a clause of the form $(l_{i,0}\lor l_{i,1}\lor l_{i,2})$, 
where each $l_{i,j}$ is a literal and $n\geq 1$ is the number of clauses (the first clause is intentionally denoted by $\psi_2$). 
In the next paragraphs, we describe the polynomial-time reduction of 3-SAT to mod-SAT.

The reduction starts by generating the first $n$ odd prime numbers $p_2,\ldots,p_{n+1}$. Recall that this is achievable in polynomial time, and the size of $p_{n+1}$ is asymptotically bounded by $n \log(n)$.
%
%
The constructed instance of mod-SAT then contains 
\begin{itemize}
    \item an ordered pair $(p_i, \{0,1,2\})$ for all $i\in\{2,\ldots,n{+}1\}$,
    \item an ordered pair $(p_j \cdot p_k, A_{j,k})$ for each pair $\psi_j$, $\psi_k$ of distinct clauses, where $A_{j,k}$ consists of all $m$ such that 
    \begin{itemize}
        \item $0 \leq m < p_j \cdot p_k$, 
        \item $m\texttt{ mod }p_j \leq 2$, 
        \item $m\texttt{ mod }p_k \leq 2$, 
        \item the literals $l_{j,m\texttt{ mod }p_j}$ and $l_{k,m\texttt{ mod }p_k}$ are not conflicting.
    \end{itemize}
\end{itemize}
Observe that the size of the above instance is polynomial in the size of the considered 3-SAT instance, even if all numerical constants are encoded in unary.


It remains to show that the constructed list of ordered pairs is a positive instance of mod-SAT if and only if the original propositional formula is satisfiable. The two implications are proven separately. 

Assume that the constructed list of ordered pairs is a positive instance of mod-SAT. We need to show that the original formula is satisfiable, i.e., it is possible to choose one literal from each clause so that all chosen literals are pairwise non-conflicting.
Let $x$ be an integer witnessing that the constructed mod-SAT instance is positive. For each $i\in \{2,3,4,\ldots,n+1\}$, choose the literal $l_{i,x\texttt{ mod }p_i}$ from the clause $\psi_i$. We show that these literals are pairwise non-conflicting. Let $\psi_j$, $\psi_k$ be distinct clauses. The corresponding chosen literals are then $l_{j,x\texttt{ mod }p_j}$ and $l_{k,x\texttt{ mod }p_k}$. These literals are non-conflicting by our construction of ordered pairs.



Conversely, assume that the original formula is satisfiable. We choose a literal
$l_{i,m_i}$ from each clause $\psi_i$ so that all of them are pairwise non-conflicting.
%
%
Let $x$ be an integer such that $(x\texttt{ mod }p_i) = m_i$ for all $i\in \{2,3,4,\ldots,n+1\}$. Observe that $x$ always exists due to the Chinese remainder theorem. 
It is easy to verify that $x$ witnesses the positivity of the constructed mod-SAT instance.

Now we can continue with the proof of Theorem~3. We show that for a given strongly connected graph $D$, a vertex $v$ of $D$, and a MR profile $\pi$, the problem whether $v$ is visited with frequency~$1$ is \coNP-hard. We reduce mod-SAT to the complement of this problem.


Let $(n_1,S_1), (n_2,S_2), \ldots\, (n_d,S_d)$ be an instance of mod-SAT with all numeric values encoded in unary. Without restrictions, we assume $d\geq 1$ and $n_i\geq 2$ for all $i$. The reduction constructs a graph $D=(V,E,\emptyset)$, where
\begin{itemize}
    \item $V=\{v\}\cup\bigcup_{i=1}^d\{v_{i,1}, \ldots, v_{i,n_i-1}\}$,
    \item $E$ consists of the edges 
    \begin{itemize}
        \item $(v,v_{j,n_j-1})$,
        \item $(v_{j,n_j-k}, v_{j,n_j-k-1})$ for all $k \in \{1,\ldots,n_j-2\}$,
        \item $(v_{j,1}, v)$,
    \end{itemize}  
    for all $j \in \{1,\ldots,d\}$.
\end{itemize}

Observe that the constructed directed graph $D$ consists of $d$ directed cycles, where all these cycles share a common vertex $v$ (no other vertex is shared).

Now we define a MR profile for $D$. 
Observe that an agent can make a non-trivial decision about the next vertex only in the vertex $v$ (all other vertices have only one outgoing edge). Hence, a MR strategy profile is fully described by the initial positions of all agents and their behavior in~$v$.
For each $i\in \{1,2,\ldots,d\}$, we add $n_i-\left|S_i\right|$ new agents (later called ``agents introduced in the $i$-th iteration''), each of them always deterministically continuing to vertex $v_{i,n_i-1}$ when being in $v$. Initially, we place one of those agents to $v$ if $0\notin S_i$ and, for every $j\in \{1,2,\ldots,n_i-1\}$ such that $j\notin S_i$, we place one of those agents to $v_{i,j}$. Hence, the total number of agents is $k=\sum_{i=1}^d(n_i-\left|S_i\right|)$.

It remains to show that the frequency of visits to $v$ is less than $1$ if and only if the considered input instance of mod-SAT is positive. Note that all agents deterministically walk around the cycle where they started. Therefore, the arrangement of all agents periodically repeats with a finite period (not exceeding the least common multiple of the lengths of the cycles). Hence, the frequency of visits to $v$ is less than $1$ if and only if there is a reachable arrangement such that none of the agents is in $v$. 

Assume that the frequency of visits to $v$ is less than $1$. Let $x\in\mathbb{N}$ be such that after $x$ steps from the beginning, none of the agents visits $v$. Hence, for each $i\in \{1,2,\ldots,d\}$, none of the agents introduced in the $i$-th iteration is present in $v$ after $x$ steps from the beginning, i.e., $(x\texttt{ mod }n_i)\in S_i$. It follows that the considered input instance of mod-SAT is positive.

Now assume that the considered instance of mod-SAT is positive. Let $x\in\mathbb{N}$ be such that $(x\texttt{ mod }n_i)\in S_i$ for each $i\in\{1,2,\ldots,d\}$. It follows that, for each $i$, none of the agents introduced in the $i$-th iteration is present in $v$ after $x$ steps from the beginning, which means that there is no agent in $v$ after $x$ steps from the beginning, and hence the frequency of visits to $v$ is less than $1$.

This concludes the proof of Theorem 3. Note that we have actually proved the hardness result for memoryless \emph{deterministic} profiles (a proper subclass of memoryless randomized profiles). Using a modification of the described polynomial-time reduction, it can also be proved, for an arbitrary fixed rational number $r\in\left(0,1\right]$, that the problem of deciding whether $v$ is visited with frequency $r$ (at least $r$, respectively) is \coNP-hard.

\subsection{A Proof of Theorem 4}

Let us fix $k \geq 1$.
Let $D = (V,E,p)$ be an MDP, $\Col : V \to \cols$ a coloring, $\Obj$ a frequency vector, and $m \geq 1$. We show that the problem whether there exists a FR$_m$ memory profile $\pi$ for $k$ agents achieving a frequency vector $\mu \geq \Obj$ is decidable in polynomial space, assuming that $m$ is encoded in unary.

As observed in Section~2 in the main body of the paper, the construction of a FR$_m$ strategy profile for $D$ is equivalent to the construction of a MR profile for an MDP $D'$ obtained from $D$ by augmenting vertices with memory states. Since the number of vertices of $D'$ is $|V| \times m$, the increase in size is \emph{linear in~$m$}. Hence, it actually suffices to prove that the existence of \emph{MR profile} achieving $\mu \geq \Obj$ is in \PSPACE. We demonstrate this by designing a non-deterministic polynomial-space decision algorithm.

Recall the notion of an end component introduced in Section~\ref{sec-MDP-normal-form}. Also, recall that for an arbitrary strategy, almost all runs of $D$ eventually stay in some end component and execute all edges of this end component infinitely often.

To decide the existence of a suitable MR profile $\pi = (\sigma_1,\ldots,\sigma_k)$, the algorithm starts by guessing, for every $i \in \{1,\ldots,k\}$, an end component $D_i = (V_i,E_i,p_i)$ where $\sigma_i$ stays, together with an initial vertex $v_i \in V_i$ of $\sigma_i$.  Note that $\sigma_i$ is in fact a \emph{full} MR strategy for $D_i$. Now the algorithm computes a formula $\Phi$ of the existential fragment of first order theory of the reals which states the existence of suitable positive values for the variables representing the edge probabilities such that the induced frequency vector $\mu$ satisfies $\mu \geq \Obj$. The subformula encoding the vector $\mu$ is constructed as follows. First, the algorithm computes the period $d_i$ of the Markov chain induced by $\sigma_i$ and $D_i$ for every $i \in \{1,\ldots,k\}$. Then, it computes the least common multiple $d$ of all $d_i$ and constructs a subformula encoding $\mu(c)$ for every $c \in \cols$. This formula is similar to the expression~(2) in the main body of the paper, i.e.,

\begin{equation*}
    \mu(c) = \frac{1}{d} \sum_{j=0}^{d-1}\bigg(
       1 - \prod_{i=1}^k \bigg( 1 - d_i \cdot\sum_{v \in V^c(i,j)} \Inv_i(v) \bigg)\bigg)\,.
\end{equation*}
The difference is that $d$ now represents the least common multiple of all $d_i$. Since $d_i \leq |V|$ for every $i \in \{1,\ldots,k\}$, the size of the above expression is \emph{polynomial for every fixed $k$} (although the degree of the polynomial grows \emph{exponentially} in~$k$).

Since the size of the resulting formula $\Phi$ is polynomial and $\Phi$ belongs to the \emph{existential fragment} of first order theory of the reals, the validity of $\Phi$ in decidable in polynomial space. Hence, the whole non-deterministic algorithm deciding the existence of a suitable MR profile $\pi$ achieving a frequency vector $\mu \geq \Obj$ runs in polynomial space. 

\section{MDPs in Normal Form}
\label{sec-MDP-normal-form}

In this section, we show that for purposes of steady-state synthesis, we can safely assume that MDPs are given in the normal form defined in Section~4 in the main body of the paper. 

\begin{definition}
Let $D = (V,E,p)$ be an MDP. An \emph{end component} of $D$ is a triple $D' = (V',E',p')$ where $V' \subseteq V$, $E' \subseteq E \cap (V' {\times} V')$, and $p'$ is the restriction of $p$ to $V' \cap V_S$ such that 
\begin{itemize}
    \item for every $v \in V'$, there is an outgoing edge $(v,u) \in E'$;
    \item if $v \in V_S \cap V'$  and $(v,u) \in E$, then $(v,u) \in E'$;
    \item $(V',E')$ is strongly connected.
\end{itemize}
An end component is \emph{maximal (a MEC)} if it is maximal w.r.t.{} component-wise inclusion.
\end{definition}

Every MDP $D$ with $m$ vertices can be efficiently decomposed into at most $m$ pairwise disjoint MECs $D_1,\ldots,D_m$, and each of these MECs can be seen as a strongly connected MDP. 

For every run $\omega$ in $D$, let 
\begin{itemize}
    \item $V_{\omega}$ be the set of all $v \in V$ that occur infinitely often along $\omega$;
    \item $E_\omega$ be the set of all edges that occur infinitely often along $\omega$. 
\end{itemize}
For all $V' \subseteq V$ and $E' \subseteq E$, let $\Run(V',E')$ be the set of all runs $\omega$ of $D$ such that $(V_\omega,E_\omega) = (V',E')$.

\begin{proposition}
\label{prop-MEC}
    Let $\xi$ be a HR strategy, $V' \subseteq V$, and $E' \subseteq E$. If $\prob_\xi(\Run(V',E')) > 0$, then $(V',E',p')$ is an end component of $D$, where $p'$ is the restriction of $p$ to $V' \cap V_S$.
\end{proposition}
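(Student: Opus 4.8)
The plan is to verify the three defining clauses of an end component for the triple $(V',E',p')$, splitting the argument into a combinatorial part -- the reachability clause and the clause on non-deterministic vertices, which hold for \emph{every} run in $\Run(V',E')$ -- and a probabilistic part -- the clause on stochastic vertices, which holds for $\prob_\xi$-almost every run and hence, since $\prob_\xi(\Run(V',E')) > 0$, for at least one run lying in $\Run(V',E')$. The main obstacle is this last clause; everything else is bookkeeping with the definitions of $V_\omega$ and $E_\omega$.

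First I would fix an arbitrary $\omega \in \Run(V',E')$ (such a run exists since $\prob_\xi(\Run(V',E')) > 0$ forces $\Run(V',E')\neq\emptyset$) and record elementary consequences of $V_\omega = V'$ and $E_\omega = E'$. Every edge in $E'$ occurs infinitely often along $\omega$, so both of its endpoints occur infinitely often, which gives $E' \subseteq E \cap (V'{\times}V')$. For the first clause, if $v \in V'$ then $v$ is visited infinitely often, and each visit is followed by one of the finitely many edges in $\Out(v)$; by the pigeonhole principle some $(v,u)$ is taken infinitely often, so $(v,u) \in E' \cap \Out(v)$. For strong connectivity, fix a time $N$ past which $\omega$ visits only vertices of $V'$ and traverses only edges of $E'$ (such $N$ exists because only finitely many vertices and edges occur finitely often along $\omega$); then for any $v,u \in V'$, an occurrence of $v$ at a time $\geq N$ followed by a later occurrence of $u$ yields a path from $v$ to $u$ inside $(V',E')$.

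For the stochastic clause I would argue that the event $R$ that some stochastic vertex is visited infinitely often while one of its outgoing edges is traversed only finitely often is $\prob_\xi$-null. Fix $v \in V_S$ and $(v,u) \in E$, let $B_n = \{\omega(n) = v\}$, let $C_n = B_n \cap \{\omega(n{+}1) = u\}$, and let $\mathcal{F}_n$ be the $\sigma$-algebra generated by the first $n$ steps. Since $v$ is stochastic, the one-step kernel at $v$ does not depend on the strategy, so $\prob_\xi(C_n \mid \mathcal{F}_n) = p(v)(u)\cdot \mathbf{1}_{B_n}$. By L\'evy's conditional Borel--Cantelli lemma, $\prob_\xi$-almost surely $C_n$ occurs infinitely often iff $\sum_n \prob_\xi(C_n \mid \mathcal{F}_n) = \infty$, and this sum equals $p(v)(u)\sum_n \mathbf{1}_{B_n}$, which diverges exactly when $v$ is visited infinitely often (using $p(v)(u) > 0$, which holds because $(v,u)\in E$). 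Hence, almost surely, if $v$ is visited infinitely often then $(v,u)$ is traversed infinitely often; a union bound over the finitely many pairs $(v,u)$ with $v \in V_S$ gives $\prob_\xi(R) = 0$.

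Finally I would combine the two parts. Since $\prob_\xi(\Run(V',E')) > 0 = \prob_\xi(R)$, there is a run $\omega \in \Run(V',E') \setminus R$. For this $\omega$ the first clause and strong connectivity hold by the combinatorial argument above, and $\omega \notin R$ means that every $v \in V_S \cap V'$ has all its outgoing edges in $E_\omega = E'$, which is the stochastic clause. It remains to check that $p'$, the restriction of $p$ to $V' \cap V_S$, is a legitimate probability assignment for $(V',E')$: for $v \in V_S \cap V'$ and $u \in V$ we have $p'(v)(u) > 0 \iff p(v)(u) > 0 \iff (v,u) \in E \iff (v,u) \in E'$, where the last equivalence uses the stochastic clause together with $E' \subseteq E$. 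Therefore $(V',E',p')$ is an end component of $D$.
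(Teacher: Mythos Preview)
Your proof is correct and follows essentially the same approach as the paper: verify the outgoing-edge and strong-connectivity clauses combinatorially from any run in $\Run(V',E')$, and handle the stochastic clause by showing the ``bad'' event has $\prob_\xi$-measure zero, so that positive measure of $\Run(V',E')$ forces the clause. Your treatment is in fact more detailed than the paper's: you make explicit the tail-time $N$ for strong connectivity and supply a rigorous conditional Borel--Cantelli argument for the zero-measure claim that the paper simply asserts, and you also verify that $p'$ is a legitimate probability assignment on $(V',E')$.
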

\begin{proof}
    Let $\omega = v_1,v_2,\ldots$ be a run of $\Run(V',E')$. Suppose $v \in V'$, Since $v$ occurs infinitely often in $\omega$, some outgoing edge $(v,u)$ of $v$ occurs infinitely often along $\omega$, which implies $(v,u) \in E'$. Also observe that if $v,u \in V'$, then $\omega$ contains a finite path from $v$ to $u$. Hence $(V',E')$ is strongly connected. Now suppose $v \in V' \cap V_S$ and $(v,u) \in E$. Then, the $\prob_\xi$ probability of all runs $\omega$ such that $v$ occurs infinitely often in $\omega$ but $(v,u)$ occurs only finitely often in $\omega$ is zero. Since $\prob_\xi(\Run(V',E')) > 0$, we have that $(v,u)$ occurs infinitely often in almost all runs of $\Run(V',E')$. Hence, $(v,u) \in E'$. This implies that $(V',E',p')$ is an end component.
\end{proof}

According to Proposition~\ref{prop-MEC}, almost all runs eventually stay in some end component, and hence also in some MEC. 

Now consider a strategy profile $\pi = (\xi_1,\ldots,\xi_k)$ such that $\pi$ achieves some frequency vector~$\mu$. Let $\calA$ be a function assigning to every $i \in \{1,\ldots,k\}$ a pair $(D^i,v^i)$ where $D^i$ is a MEC of $D$ and $v^i$ is vertex of $D^i$. Furthermore, let $\MRun_{\calA}$ be the set of all multiruns $(\omega_1,\ldots,\omega_k)$ such that, for all $i \in \{1,\ldots,k\}$, we have that $\omega_i$ stays in the MEC $D^i$ and the first vertex of $D^i$ visited by $\omega_i$ is $v^i$. We say that $\calA$ is \emph{$\pi$-eligible} if $\prob_\pi(\MRun_{\calA}) > 0$.

Since $\pi$ achieves $\mu$, we have that $\prob_\pi[\Freq{=}\mu] = 1$. This implies that $\prob_\pi[\Freq{=}\mu \mid \MRun_{\calA}] = 1$ for \emph{every} \mbox{$\pi$-eligible $\calA$}. Now consider a profile $\pi_{\calA} = (\xi_1',\ldots,\xi_k')$ such that the initial vertex of every $\xi_i'$ is $v^i$, and the strategy $\xi_i'$ behaves like the strategy $\xi_i$ after visiting the vertex $v^i$.
Since the limit frequency vector of a multirun is the same after deleting an arbitrarily long finite prefix, we have that $\pi_{\calA}$ achieves the frequency vector~$\mu$. Also note that if $\pi$ is a MR or FMR profile, then $\pi_{\calA}$ is a profile of the same type. Let $D_1,\ldots,D_m$ be all MECs of $D$.
Since $\pi_{\calA}$ can be seen as a profile for the MDP $\bigcup_{q=1}^m D_q$ in normal form, we can safely assume that the MDP on input is in normal form.
\section{Experimental Evaluation Details}
\label{sec:experiments-details}

\subsection{Benchmarks}

The plots on Figures~\ref{fig:histogram-size}, \ref{fig:histogram-period} and~\ref{fig:histogram-colors} show histograms of some features of the randomly
generated benchmarks that are used in the experimental evaluation.

\begin{figure}[htb]
  \centering
  \includegraphics[width=\linewidth]{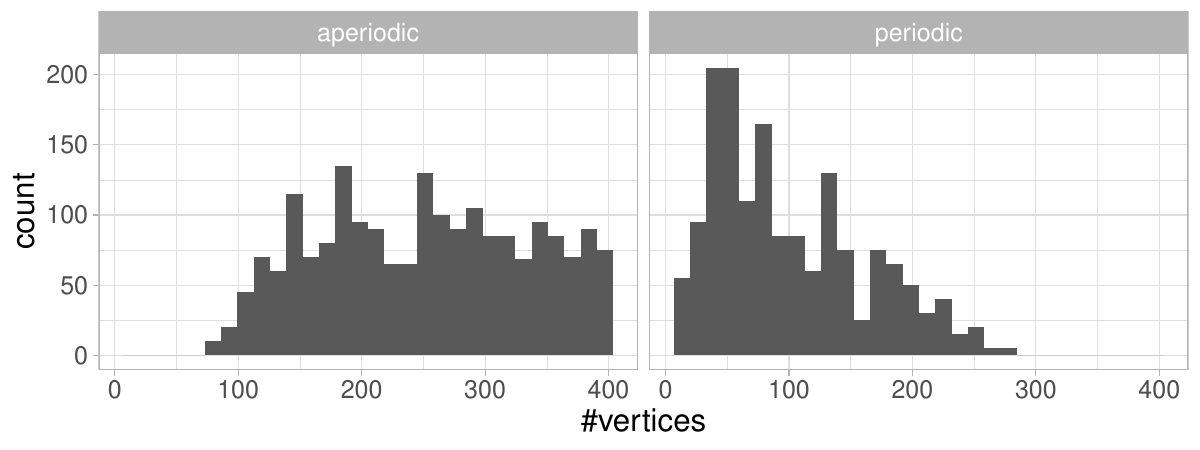}
  \caption{Number of vertices of the generated graphs.}
  \label{fig:histogram-size}
\end{figure}

\begin{figure}[htb]
  \centering
  \includegraphics[width=\linewidth]{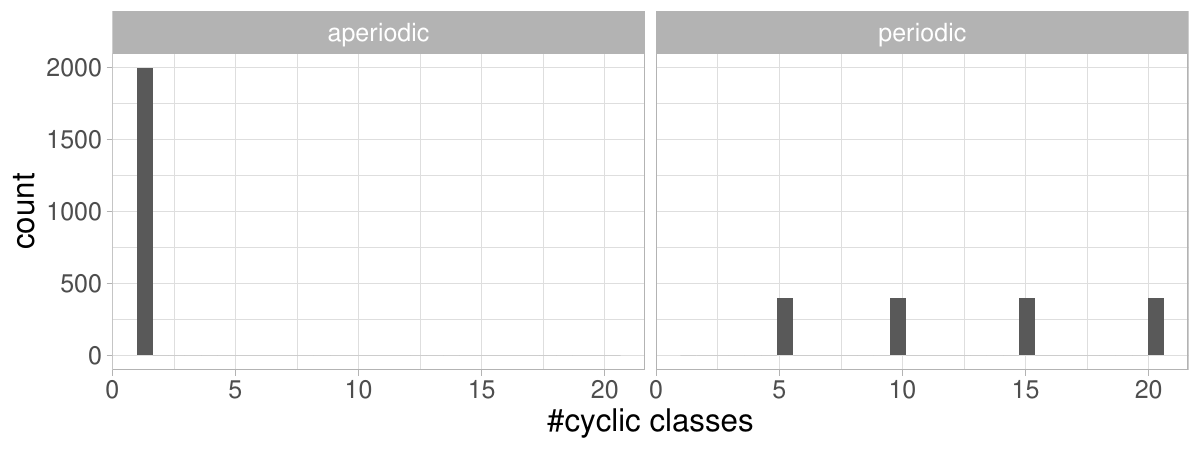}
  \caption{Number of cyclic classes of the generated graphs.}
  \label{fig:histogram-period}  
\end{figure}

\begin{figure}[htb]
  \centering
  \includegraphics[width=\linewidth]{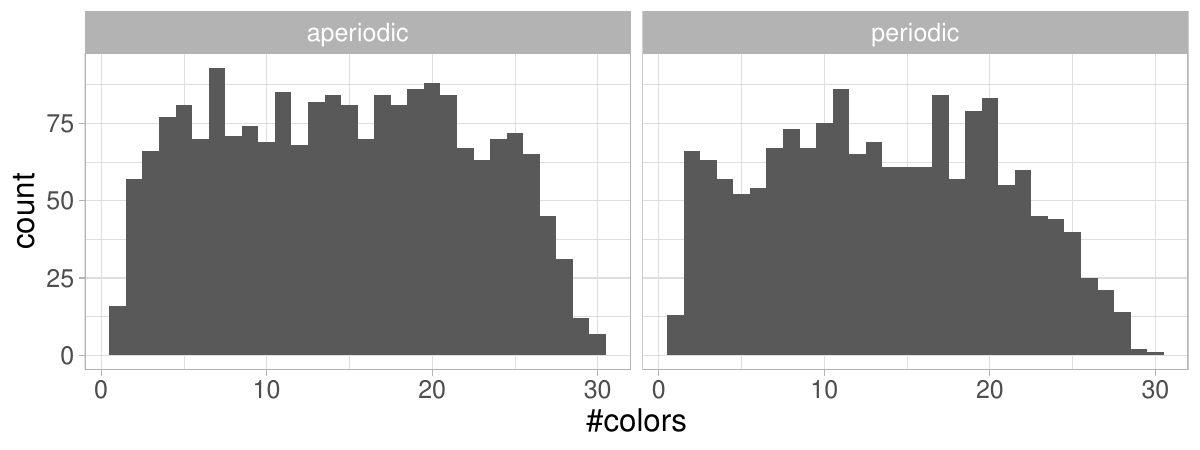}
  \caption{Number of colors of the generated objectives.}
  \label{fig:histogram-colors}  
\end{figure}

\subsection{Running times}

The plots on Figures~\ref{fig:scatter-times} show more details about the running
times of the two algorithms on all benchmarks. All the times are wall times.

\begin{figure}[htb]
  \centering
  \includegraphics[width=\linewidth]{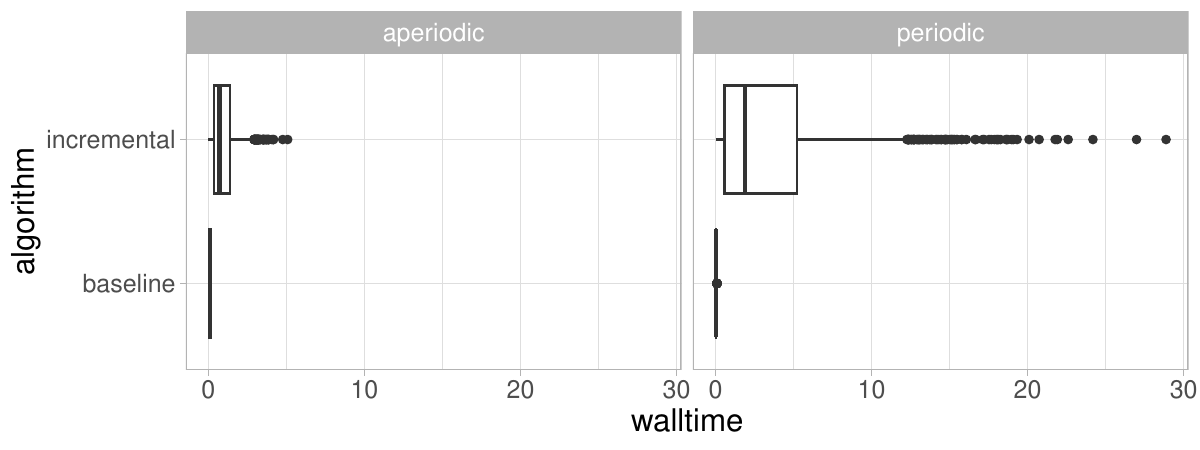}
  \caption{Box plot of running times for both of the algorithms divided by the
    type of the graph (aperiodic/periodic).}
  \label{fig:scatter-times}
\end{figure}

\begin{figure}[htb]
  \centering
  \includegraphics[width=\linewidth]{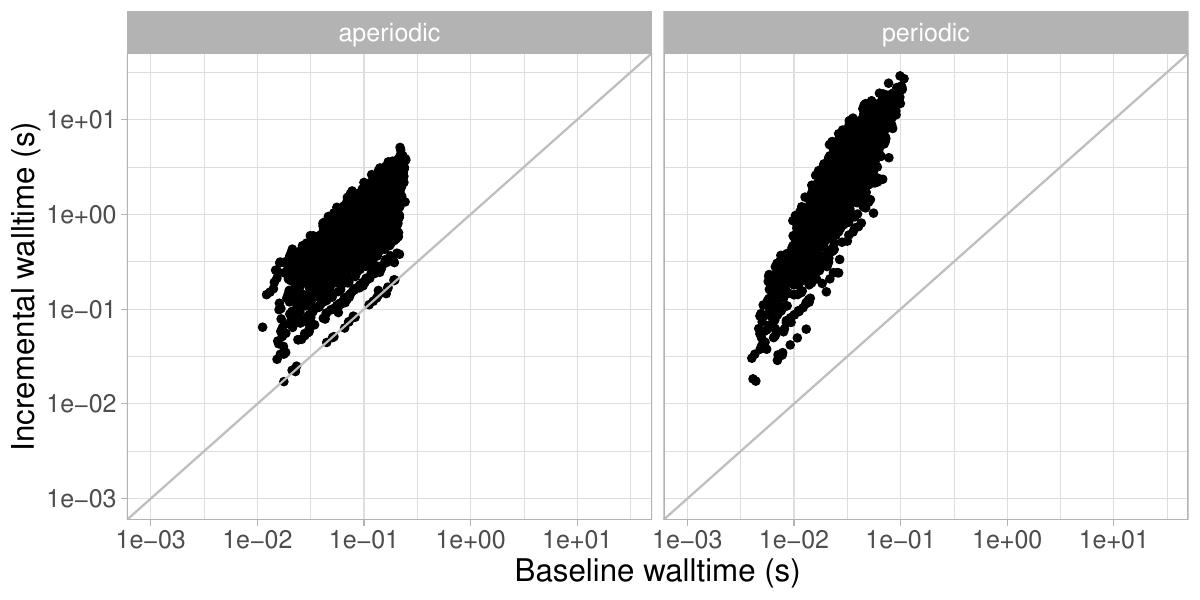}
  \caption{Comparison of running times for all benchmarks. Each dot $(x, y)$ is
    a single benchmark for which the wall time of baseline algorithm was $x$
    seconds and of our incremental algorithm $y$ seconds. The plot is in
    logscale.}
  \label{fig:scatter-times}
\end{figure}

\subsection{Achieved distances}

The plots on Figure~\ref{fig:distances_during_computation_point} show a
variant of Figure 6 from the main paper. The difference is that compared to
Figure 6 of the paper, Figure~\ref{fig:distances_during_computation_point} here
shows \emph{all of the benchmarks}, not a random subset. The values for each
benchmark and a number of agents are shown separately and not on lines, to avoid
visual mess.

The plots on Figure~\ref{fig:l_inf_distances_during_computation} and
Figure~\ref{fig:l_inf_distances_during_computation_point} show the same plots as
Figure 6 of the paper and Figure~\ref{fig:distances_during_computation_point} of
this supplementary material, but are using a different distance from the
objective. These plots show the ``cropped'' version of the $L_\infty$ distance, which is the maximum 
distance from any unsatisfied color. Note that it is a number between $0$ and
$1$ by definition, and does not need to be normalized.

\begin{figure}[htb]
  \centering
  \includegraphics[width=\linewidth]{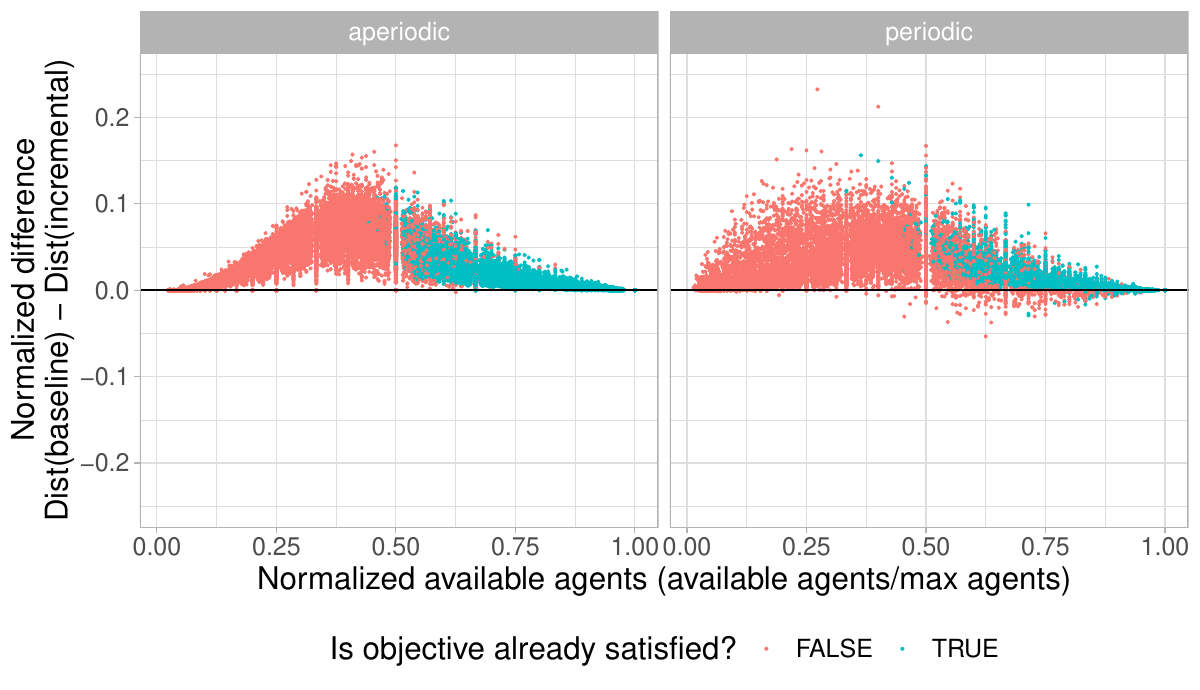}
  \caption{Comparison of distances achieved by the two algorithms on all
    benchmarks. The plot shows the difference of the normalized distances
    $\frac{\Dist(\pi_{\textrm{baseline}}, \Obj)}{|\cols|} -
    \frac{\Dist(\pi_{\textrm{incremental}}, \Obj)}{|\cols|}$ on $y$ axis for
    each number of agents between $0$ and the number sufficient for both of the
    algorithms normalized between $[0,1]$ on $x$ axis. The line is colored blue
    if any of the algorithms has already satisfied the objective.}
  \label{fig:distances_during_computation_point}
\end{figure}

\begin{figure}[htb]
  \centering
  \includegraphics[width=\linewidth]{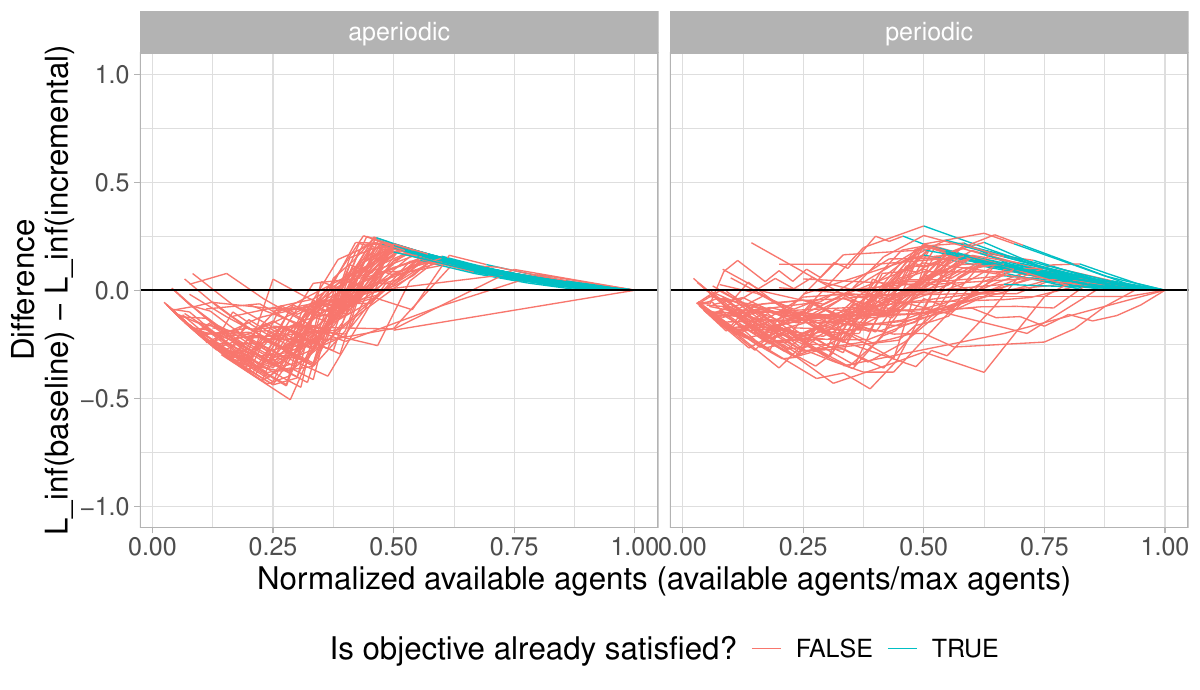}
  \caption{Comparison of $L_\infty$ distances achieved by the two algorithms on a
    randomly selected subset of 150 benchmarks. Each line represents a
    benchmark. The plot shows the difference of the normalized distances
    $L_\infty(\pi_{\textrm{baseline}}, \Obj) - L_\infty(\pi_{\textrm{incremental}},
    \Obj)$ on $y$ axis for each number of agents between $0$ and the number
    sufficient for both of the algorithms normalized between $[0,1]$ on $x$
    axis. The line is colored blue if any of the algorithms has already
    satisfied the objective.}
  \label{fig:l_inf_distances_during_computation}
\end{figure}

\begin{figure}[htb]
  \centering
  \includegraphics[width=\linewidth]{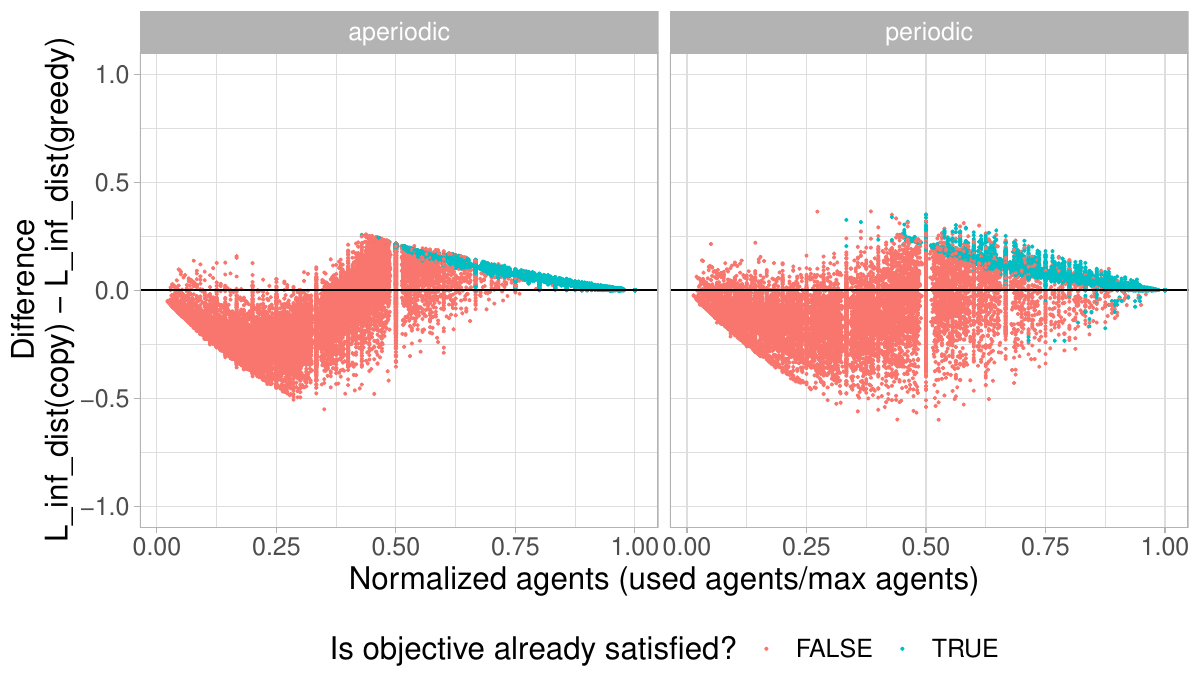}
  \caption{Comparison of $L_\infty$ distances achieved by the two algorithms on
    all benchmarks. The plot shows the difference of the normalized distances
    $L_\infty(\pi_{\textrm{baseline}}, \Obj) -
    L_\infty(\pi_{\textrm{incremental}}, \Obj)$ on $y$ axis for each number of
    agents between $0$ and the number sufficient for both of the algorithms
    normalized between $[0,1]$ on $x$ axis. The line is colored blue if any of
    the algorithms has already satisfied the objective.}
  \label{fig:l_inf_distances_during_computation_point}
\end{figure}


\end{document}